\newtheorem{theorem}{Theorem}
\newtheorem{lemma}{Lemma}
\numberwithin{equation}{section}
\newcommand{\bm}[1]{\mbox{\boldmath$ #1 $\unboldmath}}
\def\tr{{\rm tr}}
\def\vec{{\rm vec}}
\def \diag{{\rm diag}}
\begin{document}
\baselineskip=22pt
\vskip 20pt

\begin{center}
{\Large \bf On Block Cholesky Decomposition for Sparse Inverse Covariance Estimation}
\end{center}

\begin{center}
Xiaoning Kang, Jiayi Lian and Xinwei Deng$^{*}$
\vskip 5pt {\it
$^{*}$Department of Statistics, Virginia Tech \\
}
\end{center}

\begin{abstract}
The modified Cholesky decomposition is popular for inverse covariance estimation,
but often needs pre-specification on the full information of variable ordering.
In this work, we propose a block Cholesky decomposition (BCD) for estimating inverse covariance matrix under the partial information of variable ordering,
in the sense that the variables can be divided into several groups with available ordering among groups, but variables within each group have no orderings.
The proposed BCD model provides a unified framework for several existing methods including the modified Cholesky decomposition and the Graphical lasso.
By utilizing the partial information on variable ordering, the proposed BCD model guarantees the positive definiteness of the estimated matrix with statistically meaningful interpretation.
Theoretical results are established under regularity conditions.
Simulation and case studies are conducted to evaluate the proposed BCD model.
\end{abstract}

\textbf{Keywords:} Graphical model, modified Cholesky decomposition, regularization, sparsity, variable ordering.

\section{Introduction}
The estimation of covariance and inverse covariance matrices is of fundamental importance in the multivariate statistics with a broad spectrum of applications,
such as linear discriminant analysis \citep{clemmensen2011sparse}, portfolio optimization \citep{deng2013penalized}, and assimilation \citep{nino2019parallel}.
In high-dimensional data, sparse estimation of an inverse covariance matrix has specially attracted great attention,
since it is closely related to a graphical model for inferring the conditional independence between variables of multivariate normal data.
However, estimation of a large inverse covariance matrix often encounters two challenges.
First, the estimated matrix needs to be positive definite for the valid statistical inferences.
Second, the number of parameters in the model increases quickly in a quadratic order in terms of the matrix dimensionality.

Existing studies on the inverse covariance estimation in the literature generally fall into two categories.
Denote the random variables of interest by $\bm X = (X_{1}, \ldots, X_{p})'$ with mean $\bm 0$ for simplicity and covariance matrix $\bm \Sigma$.
The first category needs the pre-specification on the full ordering information of variables $\bm X$.
That is, the variables $X_{1}, \ldots, X_{p}$ have a natural ordering, which typically occurs in longitudinal data, time series, spatial data, spectroscopy and so forth.
In this situation, banded or tapering estimation for the high-dimensional matrices has been developed \citep{bickel2008regularized}, requiring an assumption that the variables are becoming weakly correlated as their positions in the ordering are far away.
Although these methods are straightforward and easy in computation, the resultant estimates may not be positive definite.
A better technique for estimating the inverse covariance matrix in this category is the modified Cholesky decomposition (MCD) introduced by \cite{Pourahmadi1999Joint}.
It not only guarantees the positive definiteness of the estimated matrix,
but also utilizes the information of variable ordering, leading to an accurate estimate.
Moreover, this decomposition has a meaningful regression interpretation, allowing the use of the regularization for sparse estimation \citep{huang2006covariance, kang2021ensemble}.
That is, the sparse pattern in the Cholesky-based matrix estimates can be induced by the sparsity in the Cholesky factors through linear regressions in the decomposition.

The second category considers that the information of variable ordering is not available.
In this case, one strategy is to identify a proper variable ordering based on a certain data driven mechanism,
transforming it into the estimation problem of the first category.
For example, \cite{Wagaman2009Discovering} determined the variable ordering by the Isomap algorithm and proposed an Isoband matrix estimate.
\cite{Dellaportas2012Cholesky} suggested using BIC criterion to seek for the variable ordering before applying the MCD technique.
\cite{Rajaratnam2013Best} recovered the variable ordering via the best permutation algorithm, which is very efficient for autoregressive model.
However, a potential drawback is obvious that the accuracy of such estimates relies on the accuracy of the estimated orderings of variables.
Furthermore, some real data, for example gene data and medical data, practically may not have a natural variable ordering,
which means that it is not adequate to find a variable ordering for such data.
Alternatively, another strategy in this category is to consider a permutation invariant estimation for inverse covariance matrix.
One popular method is the Graphical lasso (Glasso) proposed by \cite{meinshausen2006high} and \cite{yuan2007model}, of which the algorithms and properties have been widely studied \citep{friedman2008sparse, lam2009sparsistency, yuan2010high}.
Other works on the permutation invariant estimation for inverse covariance matrix can be found in \cite{xue2012regularized, wang2015shrinkage, cai2016joint, van2019generalized, wang2020ultrahigh}, among many others.

By comparing two categories of estimation methods, it is seen that the MCD is an appropriate method for estimating the inverse covariance matrix when the information of variable ordering is fully available,
while the Glasso is suitable when the information of variable ordering is not available.
However, in many applications, there is only partial information of variable ordering available.
Here the partial information means that the variables can be divided into several groups (including one group) with the group ordering known but the variable ordering within each group unknown.
For example, in the multi-stage manufacturing process \citep{shi2006stream}, each stage contains a set of variables and the stages have a natural ordering among themselves because of the nature of manufacturing process.
Thus, the full information of variable ordering is not available since the variables within the same stage may not have an ordering.
In this case the partial information of variable ordering is present since there is an ordering among stages.
To estimate the inverse covariance matrix of variables in such a multi-stage manufacturing process, neither the MCD method nor the Glasso method is adequate.
Another concrete example is the Covid-19 data in Case Studies Section.
The data were weekly collected from 37 continuous weeks at the beginning of the pandemic.
In each week, four variables which have no ordering among themselves are recorded.
Such four variables in a certain week are correlated with variables in the former weeks (see details in Case Studies).
Therefore the $4 \times 37 = 148$ variables can be naturally divided into 37 groups by calendar week, which forms a partial information of the variable ordination.

In this work, we fill in the gap to develop a block Cholesky decomposition (BCD) method for estimating the inverse covariance matrix given the partial information of variable ordering.
The proposed BCD method takes advantage of such partial information on the variable ordination to estimate the inverse covariance matrix via the joint estimation of a set of penalized multivariate regressions.
It guarantees the positive definiteness of the estimated sparse inverse covariance matrix with statistically meaningful interpretation.
Moreover, the proposed method provides a unified framework of estimating the inverse covariance matrix, where the MCD, the Glasso method, the \cite{Witten2011new}'s estimator and \cite{Rothman2010A}'s estimator can be all considered as special cases of the proposed method.
The theoretical results suggest that the proposed model can have faster consistent rate than that of Glasso method when the partial information on the variable ordination is present.
The R codes of implementing the proposed model are available at https://github.com/xiaoningmike/BCD.

The remainder of this work is organized as follows.
Section \ref{sec:prop} develops the proposed method along with its parameter estimation.
The asymptotically theoretical property is established in Section \ref{sec:theory} under regularity conditions.
Sections \ref{sec:simulation} and \ref{sec:app} demonstrate the merits of the proposed model via simulations and two real data examples.
We conclude our work with some discussion in Section \ref{sec:con}.
All technical proofs are reported in the Appendix.

\section{The Proposed Method}\label{sec:prop}
In this section, we describe the proposed methodology of the BCD method. Different from the MCD technique, the proposed BCD method only requires partial information of the variable ordination.

\subsection{Block Cholesky Decomposition for Inverse Covariance Matrix}\label{sec:bmcd}
Suppose that the variables in $\bm X$ can be partitioned into $M$ groups, with the $j$th group of variables denoted as $\bm X^{(j)} = (X^{(j)}_{1}, X^{(j)}_{2}, \ldots, X^{(j)}_{p_{j}})', j = 1, 2, \ldots, M$, where $p_{j}$ is the number of variables in the $j$th group, and $\sum_{j=1}^M p_j = p$.
Assume that these $M$ groups of variables have a natural ordering of $\bm X^{(1)}, \bm X^{(2)}, \ldots, \bm X^{(M)}$,
while there is not any ordering structure among variables $X^{(j)}_{1}, X^{(j)}_{2}, \ldots, X^{(j)}_{p_{j}}$ within each group $\bm X^{(j)}$.
We call such ordering information of these $M$ groups as the {\it partial information of the variable ordination}.
Note that when $M=1$, it reduces to the case that there is no ordering information for the $p$ variables $X_{1}, \ldots, X_{p}$,
and $M = p$ corresponds to the case that the $p$ variables have a full ordering information.
In this work, we assume both of $p$ and $M$ can be diverged as the sample size goes to infinity.

Without loss of generality, we write $\bm X = ( (\bm X^{(1)})', (\bm X^{(2)})', \ldots, (\bm X^{(M)})')'$ with its inverse covariance matrix $\bm \Omega = \bm \Sigma^{-1}$.
The key idea of the BCD model is to decompose $\bm \Omega$ by
a block lower triangular matrix constructed from the multivariate regression coefficients when the variable group $\bm X^{(j)}$ is regressed on its preceding variable groups
$\bm X^{(1)}, \bm X^{(2)}, \ldots, \bm X^{(j-1)}$ for $j=2, 3, \ldots, M$.
Specifically, the BCD method considers a series of multivariate regressions
\begin{align}\label{inverse covariance:eq1}
\bm X^{(j)} = \sum_{i=1}^{j -1} \bm A_{ji} \bm X^{(i)} + \bm \epsilon_{j}
       = \bm A_{j} \bm Z^{(j)} + \bm \epsilon_{j}, \ j = 2, \ldots, M,
\end{align}
where $\bm Z^{(j)} = ( (\bm X^{(1)})', (\bm X^{(2)})', \ldots, (\bm X^{(j-1)})')'$ and
$\bm A_{j} = (\bm A_{j1}, \ldots, \bm A_{j, j-1})$ with
$\bm A_{ji}$ being the $p_j \times p_i$ coefficient matrix.
Here $\bm \epsilon_j$ is the $p_j$-dimensional vector of error term for the $j$th multivariate regression with $E \bm \epsilon_j = \bm 0$ and $Cov(\bm \epsilon_j) = \bm D_j$.
Hence, we can construct a block lower triangular matrix $\bm A$ written as
\begin{align*}
\bm A = \left (
\begin{array}{ccccc}
\bm 0 & \bm 0 & \bm 0 & \ldots & \bm 0 \\
\bm A_{21} & \bm 0 & \bm 0 & \ldots & \bm 0 \\
\bm A_{31} & \bm A_{32} & \bm 0 & \ldots & \bm 0 \\
\vdots & \vdots & \ddots & \vdots & \vdots \\
\bm A_{M1} & \bm A_{M2} & \ldots & \bm A_{M, M-1} & \bm 0
\end{array}
\right),
\end{align*}
with its $j$th diagonal element being a $p_j \times p_j$ zero matrix,
and its lower left part composed of all the regression coefficient matrices in Equation \eqref{inverse covariance:eq1}.
Besides, define
\begin{align*}
\bm D_{j} = Cov(\bm \epsilon_{j}) = \left\{
\begin{array}{l}
Cov(\bm X^{(1)}), ~~~~~~~~~~~~~~~~~~~~~~~ j = 1, \\
Cov(\bm X^{(j)} - \sum_{i=1}^{j-1} \bm A_{ji} \bm X^{(i)}), ~ j = 2, 3, \ldots, M.
\end{array}
\right.
\end{align*}
Denote by $\bm D = \diag(\bm D_{1}, \bm D_{2}, \ldots, \bm D_{M})$ the block diagonal covariance matrix of vector $\bm \epsilon  = (\bm \epsilon_{1}', \bm \epsilon_{2}', \ldots, \bm \epsilon_{M}')'$.
Thus the multivariate regressions in \eqref{inverse covariance:eq1} can be written as
\begin{align}\label{inverse covariance:eq2}
\bm \epsilon = \bm X - \bm A \bm X = (\bm I- \bm A) \bm X \triangleq \bm T \bm X,
\end{align}
where $\bm I$ represents the $p \times p$ identity matrix, and $\bm T = \bm I - \bm A$ is a unit block lower triangular matrix having ones on its diagonal.
The matrices $\bm T$ and $\bm D$ are called the block Cholesky factors.
By taking $Cov(\bm \epsilon) = Cov (\bm T \bm X)$ in Equation \eqref{inverse covariance:eq2}, we have $\bm D  = \bm T \bm \Sigma \bm T'$, which consequently leads to
\begin{align}\label{decomposition}
\bm \Omega =  \bm \Sigma^{-1} = \bm T' \bm D^{-1} \bm T,
\end{align}
where $\bm D^{-1} = \diag(\bm D_{1}^{-1}, \bm D_{2}^{-1}, \ldots, \bm D_{M}^{-1})$.
As a result, the BCD method reduces the challenge of modeling an inverse covariance matrix based on the partial information of the variable ordination into
the problem of estimating $(M - 1)$ multivariate linear regressions.
Because of the decomposition \eqref{decomposition},
the general sparsity in $\bm T$ would induce some sparsity in $\bm \Omega$ (although it may not induce certain structured sparsity), which is easily implemented by the regularization on the linear regressions in \eqref{inverse covariance:eq1}.
The decomposition \eqref{decomposition} also indicates that the BCD method can guarantee the positive definiteness of the resulting estimate of $\bm \Omega$ provided that $\bm D_1^{-1}, \bm D_2^{-1}, \ldots, \bm D_{M}^{-1}$ are all positive definite.
Note that there is no constraint required for parameters in matrix $\bm T$.
We would like to remark that although the proposed model needs information on the group ordering, it is invariant to the permutation of variables within each group because the coefficient estimation of multivariate regressions in \eqref{inverse covariance:eq1} is not affected by the ordering of variables in $\bm X^{(j)}$ and $\bm Z^{(j)}$.
This point is verified in the simulation study.
Moreover, it is seen that the MCD is a special case of the proposed BCD with $M = p$ (i.e., $p_j = 1$ for $j = 1, 2, \ldots, M$),
which represents the case where the $p$ variables have a full ordering information.

\subsection{Parameter Estimation}

Denote by $\bm x_{1}, \bm x_{2}, \ldots, \bm x_{n}$ the $n$ independently and identically distributed observations from the multivariate normal distribution $\mathcal{N}_{p}(\bm 0, \bm \Omega^{-1})$.
Let $\mathbb{X} = (\bm x_{1}, \bm x_{2}, \ldots, \bm x_{n})'$ be the $n \times p$ data matrix.
Based on the partial information of the variable ordination,
we partition $\mathbb{X}$ by columns and write $\mathbb{X} = (\mathbb{X}^{(1)}, \mathbb{X}^{(2)}, \ldots, \mathbb{X}^{(M)})$,
where $\mathbb{X}^{(j)}$ represents the $n \times p_{j}$ sub-data matrix corresponding to the $j$th variable group $\bm X^{(j)}$.
Based on the methodology of BCD, we need to model a set of multivariate regressions in \eqref{inverse covariance:eq1} to obtain the estimates of block Cholesky factors matrices $\bm T$ and $\bm D^{-1}$.
The negative joint log-likelihood function is expressed as
\begin{align}\label{inverse covariance:eq5}
L(\bm A,\bm D^{-1}) = \sum_{j=1}^{M} \left \{ -\log |\bm D_{j}^{-1}| +  \tr \left[ \bm S_{\epsilon_j} \bm D_{j}^{-1} \right ] \right \},
\end{align}
up to some constant.
The symbol $\bm S_{\epsilon_j} = \frac{1}{n}(\mathbb{X}^{(j)} - \mathbb{Z}^{(j)} \bm A'_{j})'(\mathbb{X}^{(j)} - \mathbb{Z}^{(j)} \bm A'_{j})$, where  $\mathbb{Z}^{(j)} = (\mathbb{X}^{(1)}, \mathbb{X}^{(2)}, \ldots, \mathbb{X}^{(j-1)})$ stands for the data matrix of the first $(j - 1)$ groups of variables.
By encouraging the sparsity in the estimates $\hat{\bm A}_j$ and $\hat{\bm D}_j^{-1}$, we can obtain a sparse estimate $\hat{\bm \Omega} = \hat{\bm T}' \hat{\bm D}^{-1} \hat{\bm T}$.
Thus, the penalized log-likelihood function in the following is considered for parameter estimation as
\begin{align}\label{eq:obj}
L_{\lambda}(\bm A,\bm D^{-1})
&= \sum_{j=1}^{M} \left \{ -\log |\bm D_{j}^{-1}| +  \tr \left[ \bm S_{\epsilon_j} \bm D_{j}^{-1} \right ] \right \} + \lambda_{1} \sum_{j=2}^{M} \| \bm A_{j} \|_{1} +  \lambda_{2} \sum_{j=1}^{M} \| \bm D_{j}^{-1} \|_{1}^{-}   \nonumber \\
&= \sum_{j=1}^{M} \left\{ -\log |\bm D_{j}^{-1}| + \tr \left[ \bm S_{\epsilon_j} \bm D_{j}^{-1} \right ] + \lambda_{1} \| \bm A_{j} \|_{1} +  \lambda_{2} \| \bm D_{j}^{-1} \|_{1}^{-} \right\}    \nonumber \\
& \triangleq \sum_{j=1}^{M} \ell_{\lambda}(\bm A_j,\bm D_j^{-1}),
\end{align}
where $\ell_{\lambda}(\bm A_j,\bm D_j^{-1}) = -\log |\bm D_{j}^{-1}| + \tr \left[ \bm S_{\epsilon_j} \bm D_{j}^{-1} \right ] + \lambda_{1} \| \bm A_{j} \|_{1} +  \lambda_{2} \| \bm D_{j}^{-1} \|_{1}^{-}$ with $\bm A_{1} = \bm 0$ being zero matrix.
Here $\lambda_{1} \ge 0$ and $\lambda_{2} \ge 0$ are tuning parameters,
the matrix norm $||\bm B||_1 = \sum_{i,j} |b_{ij}|$, and $||\bm B||_1^{-} = \sum_{i \neq j} |b_{ij}|$ with $b_{ij}$ being the elements of matrix $\bm B$.

Note that different components $\ell_{\lambda}(\bm A_j,\bm D_j^{-1})$ contain different parameters $\bm A_j$ and $\bm D_j^{-1}$.
Hence minimizing $L_{\lambda}(\bm A,\bm D^{-1})$ is equivalent to minimizing each  $\ell_{\lambda}(\bm A_j,\bm D_j^{-1})$ separately,
which facilitates a parallel computing procedure for estimating $(\bm A_j,\bm D_j^{-1})$ simultaneously to save computational time.
For each minimization of $\ell_{\lambda}(\bm A_j,\bm D_j^{-1})$, although the objective function is not convex with respect to $(\bm A_j,\bm D_j^{-1})$, it is a biconvex optimization \citep{gorski2007biconvex}.
That is, $\ell_{\lambda}(\bm A_j,\bm D_j^{-1})$ is convex over $\bm A_j$ when fixing $\bm D_j^{-1}$, and is also convex over $\bm D_j^{-1}$ when fixing $\bm A_j$.
This property enables us to apply a coordinate descent algorithm to iteratively estimate $\bm A_j$ by minimizing $\ell_{\lambda}(\bm A_j,\bm D_j^{-1})$ for a given $\bm D_j^{-1} = (\bm D_j^{-1})_\ast$,
and estimate $\bm D_j^{-1}$ by minimizing $\ell_{\lambda}(\bm A_j,\bm D_j^{-1})$ for a given $\bm A_j = (\bm A_j)_\ast$ \citep{Rothman2010sparse, Sofer2014Variable}.
Specifically, for a given $(\bm D_j^{-1})_\ast$, we solve
\begin{align}\label{inverse covariance:eq3}
\hat{\bm A}_{j} [(\bm D_{j}^{-1})_\ast] &= \arg \min_{\bm A_j} \ell_{\lambda}(\bm A_j | (\bm D_{j}^{-1})_\ast) \nonumber  \\
&= \arg \min_{\bm A_j} \{ \frac{1}{n} \tr [(\mathbb{X}^{(j)} - \mathbb{Z}^{(j)} \bm A'_{j}) (\bm D_{j}^{-1})_\ast (\mathbb{X}^{(j)} - \mathbb{Z}^{(j)} \bm A'_{j})'] + \lambda_{1} \| \bm A_{j} \|_{1}  \}  \nonumber  \\
&= \arg \min_{\tilde{\bm A}_j} \{ \frac{1}{n}(\tilde{\mathbb{X}}^{(j)} - \tilde{\mathbb{Z}}^{(j)} \tilde{\bm A}_{j})'(\tilde{\mathbb{X}}^{(j)} - \tilde{\mathbb{Z}}^{(j)} \tilde{\bm A}_{j}) + \lambda_{1} \| \tilde{\bm A}_{j} \|_{1} \},
\end{align}
where $\tilde{\mathbb{X}}^{(j)} = \vec[\mathbb{X}^{(j)} (\bm D_{j}^{-\frac{1}{2}})_\ast]$ , $\tilde{\mathbb{Z}}^{(j)} = (\bm D_{j}^{-\frac{1}{2}})_\ast \otimes \mathbb{Z}^{(j)}$ and $\tilde{\bm A}_{j} = \vec[\bm A'_{j}]$.
Here vec denotes the vectorization operator and $\otimes$ the Kronecker product, and their property $\vec(\bm A \bm B \bm C) = (\bm C' \otimes \bm A) \vec(\bm B)$ is applied.
The optimization problem \eqref{inverse covariance:eq3} is thus a linear regression with Lasso penalty \citep{tibshirani1996regression}.
The initial value for $(\bm D_j^{-1})_\ast$ is set to be the identity matrix.
On the other hand, for a given $(\bm A_j)_\ast$, we solve
\begin{align}\label{inverse covariance:eq4}
\hat{\bm D}_j^{-1} [(\bm A_{j})_\ast] &= \arg \min_{\bm D_j^{-1}} \ell_{\lambda}(\bm D_j^{-1} | (\bm A_{j})_\ast) \nonumber  \\
&= \arg \min_{\bm D_j^{-1}} \left\{ -\log |\bm D_{j}^{-1}| + \tr \left[ (\bm S_{\epsilon_j})_\ast \bm D_{j}^{-1} \right ] + \lambda_{2} \| \bm D_{j}^{-1} \|_{1}^{-} \right\},
\end{align}
where $(\bm S_{\epsilon_j})_\ast = \frac{1}{n} [\mathbb{X}^{(j)} - \mathbb{Z}^{(j)} (\bm A'_{j})_\ast]'[\mathbb{X}^{(j)} - \mathbb{Z}^{(j)} (\bm A'_{j})_\ast]$.
It has the same form as Glasso estimation.
Accordingly, the estimates $\hat{\bm A}_j$ and $\hat{\bm D}_j^{-1}$ are iteratively solved from optimization problems \eqref{inverse covariance:eq3} and \eqref{inverse covariance:eq4} until convergence.
After obtaining the estimates $\hat{\bm A}_j$ and $\hat{\bm D}_j^{-1}$, we construct the block Choleksy factor estimate $\hat{\bm T}$ with $-\hat{\bm A}_j$ as the $j$th block row and the identity matrix being the block diagonal.
The estimate $\hat{\bm D}^{-1}$ is constructed with $\hat{\bm D}_j^{-1}$ as its $j$th block diagonal.
Then $\hat{\bm \Omega} = \hat{\bm T}' \hat{\bm D}^{-1} \hat{\bm T}$ is a sparse estimate of inverse covariance matrix under the partial information of variable ordination.
We briefly summarize the above estimation procedure in Algorithm 1.

\noindent \textbf{Algorithm 1}

\noindent Input: Data $\mathbb{X}$, tuning parameters $\lambda_{1}$ and $\lambda_{2}$.

\noindent Output: Estimate $\hat{\bm \Omega}(\lambda_{1}, \lambda_{2})$ corresponding to $\lambda_{1}$ and $\lambda_{2}$.

\noindent For $j = 1$ to $M$ do

{Step 0}: Set an initial value of $\bm D_j = \bm I$.

{Step 1}: Given $\bm D_j^{-1} = \hat{\bm D}_{j;t}^{-1}$, solve $\bm A_j$ in \eqref{inverse covariance:eq3} by the Lasso technique.

{Step 2}: Given $\bm A_j= \hat{\bm A}_{j;t}$, solve $\bm D_j^{-1}$ in \eqref{inverse covariance:eq4} by the Glasso technique.

{Step 3}: Repeat Steps 1 and 2 till both $\hat{\bm A}_j = \hat{\bm A}_{j;t}$ and $\hat{\bm D}_j^{-1} = \hat{\bm D}_{j;t}^{-1}$ converge.

\noindent End

\noindent {Step 4}: $\hat{\bm T} = \bm I - \hat{\bm A}$ and $\hat{\bm D}^{-1} = \diag(\hat{\bm D}_{1}^{-1}, \ldots, \hat{\bm D}_{M}^{-1})$, then $\hat{\bm \Omega}(\lambda_{1}, \lambda_{2}) = \hat{\bm T}' \hat{\bm D}^{-1} \hat{\bm T}$.

Here $\hat{\bm A}_{j;t}$ and $\hat{\bm D}_{j;t}$ represent the estimates of $\bm A_j$ and $\bm D_j$ in the $t$th iteration.
The convergence criteria are $||\hat{\bm A}_{j;t} - \hat{\bm A}_{j;t-1}||_F^2<\tau_1$ and $||\hat{\bm D}_{j;t} - \hat{\bm D}_{j;t-1}||_F^2<\tau_2$,
where $\tau_1$ and $\tau_2$ are two pre-selected small quantities, and $||\cdot||_F$ stands for the Frobenius norm.
Since the objective \eqref{eq:obj} is not joint convex, there is no guarantee of finding the global minimum.
However, Algorithm 1 uses a coordinate descent to compute a local solution of \eqref{eq:obj}.
Steps 1 and 2 both ensure a decrease in the value of objective, leading to the convergence of $\bm A_j$ and $\bm D_j$.
In Step 3, we also set a maximum number of iterations as 100 in case that the parameter estimation is not empirically converged.
However, in the simulation we have tried, almost all the parameter estimates satisfy the convergence criteria quickly with several tens of iterations.

Note that there are two tuning parameters $\lambda_{1}$ and $\lambda_{2}$ in the objective function \eqref{eq:obj}.
To choose their optimal values, studies in the literature often suggest cross-validation, information criteria, independent validation set mechanism and so forth.
In this work the BIC (Bayesian information criterion) proposed by Yuan and Lin (2007) is adopted to determine the optimal values of tuning parameters as follows
\begin{align*}
\mbox{BIC}(\lambda_{1}, \lambda_{2}) = -\log|\hat{\bm \Omega}(\lambda_{1}, \lambda_{2})| + \tr[\hat{\bm \Omega}(\lambda_{1}, \lambda_{2}) \bm S]
+ \frac{\log n}{n} \upsilon(\hat{\bm \Omega}(\lambda_{1}, \lambda_{2})),
\end{align*}
where $\bm S$ is the sample covariance matrix, and $\upsilon(\hat{\bm \Omega}(\lambda_{1}, \lambda_{2}))$ represents the number of non-zeros in the lower triangular part of estimate $\hat{\bm \Omega}(\lambda_{1}, \lambda_{2})$.
The optimal values of tuning parameters are selected to minimize BIC$(\lambda_{1}, \lambda_{2})$.

At the end of this section, we would like to point out that the literatures studying the MCD often assume a certain type of sparse structure, e.g. banded structure, for the underlying inverse covariance matrix.
Then some sparse patterns in $\hat{\bm T}$ will lead to certain desired sparse structures of estimate $\hat{\bm \Omega}$.
However when the underlying matrix have a general unstructured sparsity, the relationship of sparsity between matrices $\hat{\bm T}$ and $\hat{\bm \Omega}$ is not very explicit.
Nonetheless, the sparsity in $\hat{\bm \Omega}$ is also able to be induced by a sparse estimate $\hat{\bm T}$ empirically \citep{huang2006covariance, Kang2020An}.
Moreover, we justify this point in the simulation by FSL criterion which evaluates the capability of catching the sparsity of underlying matrix.

\subsection{Comparison and Connection with Several Existing Methods}
The proposed BCD method obtains a sparse inverse covariance matrix estimate under the partial information of variable ordination.
Both of the Cholesky factors' estimates $\hat{\bm T}$ and $\hat{\bm D}^{-1}$ contain the sparsity resulting from a set of penalized multivariate regressions.
It makes close connection with several existing methods.


Firstly, we demonstrate that the MCD, Glasso, \cite{Rothman2010A} and \cite{Witten2011new} are all special cases of the proposed method.
The MCD method for estimating $\bm \Omega$ is a special case when $M = p$, indicating that there is a full ordering information.
When data have partial ordering information, the MCD is not suitable.
The MCD needs a pre-specified full ordering before analyzing data, but the variables have no ordering within each group.
Therefore, one needs to identify an ordering before applying the MCD.
However, different orderings would lead to different estimates \citep{Chang2010Estimation}, and an incorrectly identified ordering would result in an inaccurate estimate.
Additionally, the Glasso is a special case of the proposed BCD with $M = 1$, implying that there is no ordering information.
Besides, \cite{Rothman2010A} studied a banded estimate of $\bm \Omega$ via MCD.
Their approach can also be viewed as a special case of the proposed BCD model by $M=p$ and regressing
$\bm X^{(j)}$ only on its several nearest previous group variables instead of all the previous group variables in Equation \eqref{inverse covariance:eq1}.
Furthermore, the proposed BCD method can be easily extended to estimate the inverse covariance matrix with a banded block structure.

In addition, \cite{Witten2011new} introduced a block diagonal inverse covariance estimation with
each block obtained by the Glasso on the corresponding group variables.
Their method assumed that variables in different groups are independent.
Thus their estimator is also a special case of the proposed BCD, in the sense that the block Cholesky factor $\bm T$ becomes the identity matrix under the independence assumption between group variables,
and each $\bm D_j^{-1}$ is then estimated from Glasso on $\mathbb{X}^{(j)}$.
The estimate $\hat{\bm \Omega} = \hat{\bm T}' \hat{\bm D}^{-1} \hat{\bm T}$ is the same as that in \cite{Witten2011new}.

Secondly, we compare several methods from perspective of ordering information.
When there is no ordering among variables, apart from Glasso which penalizes likelihood function, some papers investigated matrix estimation through penalized pseudo-likelihood.
They solve their optimization problems often in a column-by-column fashion, such as \cite{yuan2010high, cai2011constrained, liu2015fast, liu2017tiger} and so forth.
\cite{liu2015fast} extended the idea of \cite{cai2011constrained} and proposed SCIO estimator as
\begin{align*}
\hat{\bm \beta}_i = \arg \min_{\bm \beta} \left\{ \frac{1}{2} \bm \beta' \bm S \bm \beta - \bm e'_i \bm \beta + \lambda_i ||\bm \beta||_1 \right\},
\end{align*}
where $\bm e_i$ is the $i$th column of the identity matrix, $\lambda_i > 0$ is a tuning parameter, and $\hat{\bm \beta}_i$ is the estimate of the $i$th column of $\bm \Omega$.
However the estimates obtained from a column by column fashion are not guaranteed to be positive definite, and a symmetrization step is also needed to make their estimates symmetric.
In the contrast, the proposed BCD estimate is itself symmetric and positive definite.

On the other hand, when variables have a full information on the ordering, two recent works \cite{yu2017learning} and \cite{khare2019scalable} proposed to estimate $\bm \Omega$ via the classical Cholesky decomposition $\bm \Omega = \bm L' \bm L$, where the Cholesky factor $\bm L$ is a lower triangular matrix.
Both of such decomposition and the MCD induce the sparse estimates by the sparsity in the Cholesky factors.
The advantage of classical Cholesky decomposition is to directly result in a convex objective with respect to parameter $\bm L$, hence guaranteeing a global convergence with an appropriate penalty.
But it lacks statistical meanings and interpretation as explicit as the MCD technique.
\cite{yu2017learning} assumed ``local dependence" in the ordered data in the sense that the $j$th variable is correlated with its $K_j$ nearest variables, where $K_j$ can be different. That is, they assumed structured sparse pattern in the underlying $\bm \Omega$, while the proposed BCD is suitable for a general or unstructured sparsity.
\cite{khare2019scalable} introduced CSCS estimator, which accommodates the unstructured sparsity by imposing a Lasso-type penalty on the likelihood function in terms of $\bm L$, and developed a cyclic coordinate algorithm which leads to a closed form of solution for the estimates of each row of $\bm L$.
To the best of our knowledge, few works have contributed to the inverse covariance estimation when variables have partial ordering information.

\section{Theoretical Properties}\label{sec:theory}
In this section, we establish the asymptotically theoretical properties for the proposed BCD estimator.
To facilitate the presentation and proofs, we introduce some notation and make assumptions on the true model.
Let $\bm \Omega_{0} = \bm T_{0}' \bm D_{0}^{-1} \bm T_{0}$ be the underlying inverse covariance matrix with its block MCD according to the group variables $\bm X^{(1)}, \bm X^{(2)}, \ldots, \bm X^{(M)}$.
Let $\bm T_{j} = - \bm A_{j}$,
and $\bm T_{j0}$ be the counterpart of $\bm T_{j}$ in the block Cholesky factor $\bm T_{0}$.
That is, $\bm T_{j0}$ is the $j$th block row in the lower triangular part of matrix $\bm T_{0}$.
Define $\mathcal{Z}_{T_j} = \{(i, k): (\bm T_{j0})_{ik} \neq 0 \}$ as the collection of nonzero elements in the matrix $\bm T_{j0}$.
Similarly, denote the counterpart of $\bm D_j$ in the block Cholesky factor $\bm D$ by the matrix $\bm D_{j0}$, which is the $j$th block diagonal of matrix $\bm D_{0}$.
Let $\mathcal{Z}_{D_j} = \{(i, k): i \neq k, (\bm D_{j0}^{-1})_{ik} \neq 0 \}$ be the collection of nonzero off-diagonal elements in the matrix $\bm D_0^{-1}$.
Denote by $s_{T_j}$ and $s_{D_j}$ the cardinality of $\mathcal{Z}_{T_j}$ and $\mathcal{Z}_{D_j}$, respectively.
Let $s_{T} = \sum_{j=1}^M s_{T_j}$ and $s_{D} = \sum_{j=1}^M s_{D_j}$.
In order to achieve the asymptotic consistent property of the proposed estimator,
a mild condition is needed that there exists a constant $\theta > 0$ such that the singular values of $\bm \Omega_{0}$ are bounded as
\begin{align}\label{assumption1}
1/\theta < \varphi_{p}(\bm \Omega_{0}) \leq \varphi_{1}(\bm \Omega_{0}) < \theta,
\end{align}
where $\varphi_{1}(\bm B), \varphi_{2}(\bm B), \ldots, \varphi_{p}(\bm B)$ represent the singular values of matrix $\bm B$ in a decreasing order.
This assumption is made to guarantee the positive definiteness of $\bm \Omega_0$.
In addition, assume that there exist constants $0 \leq C_j \leq 1$ and $\gamma_j \geq 1$ such that
\begin{align}\label{assumption2}
p_j = \gamma_j p^{C_j} ~~~\mbox{for}~~~j = 1, 2, \ldots, M.
\end{align}
Assumption \eqref{assumption2} characterizes the relation between each $p_j$ and $p$ for high-dimensional data.
If $p$ diverges to infinity, $C_j = 0$ corresponds to the case where $p_j$ is fixed and controlled by $\gamma_j$, while $0 < \tau_c \leq C_j \leq 1$ implies that $p_j$ also diverges, but not faster than $p$, where $\tau_c$ is some positive arbitrarily small number.
Denote $\mathcal{Z}_{C} = \{j: 0 < \tau_c \leq C_j \leq 1 \}$, the set of indices that the corresponding groups have a diverging number of variables.
Now, we present the main results in Theorems \ref{theorem1} and \ref{theorem2}.

\begin{theorem}{\label{theorem1}}
Suppose that $\bm x_{1}, \ldots, \bm x_{n}$ are $n$ independently and identically distributed observations from $\mathcal{N}_{p}(\bm 0, \bm \Omega^{-1})$.
Let $(\bm D_{j}^{-1})_\ast$ and $(\bm A_{j})_\ast$ be any estimates of $\bm D_j^{-1}$ and $\bm A_j$ obtained from the path of Algorithm 1.
Under \eqref{assumption1} and \eqref{assumption2}, assume that the tuning parameters $\lambda_{1}$ and $\lambda_{2}$ satisfy $\lambda_{1} = O(\sqrt{\log(p) / n})$, $\lambda_{2} = O(\sqrt{\log(p) / n})$, then

\noindent (a) there exists a local minimum $\hat{\bm A}_j( = -\hat{\bm T}_j )$ of $\ell_{\lambda}(\bm A_j | (\bm D_{j}^{-1})_\ast)$ such that $\| \hat{\bm T_j} - \bm T_{j0} \|_{F} = O_p(s_{T_j} \log(\sum_{k=1}^j p_k) / n)$.

\noindent (b) there exists a local minimum $\hat{\bm D}_j$ of $\ell_{\lambda}(\bm D_j^{-1} | (\bm A_{j})_\ast)$ such that $\| \hat{\bm D_j} - \bm D_{j0} \|_{F} = O_p ( (s_{D_j} + p_j) \log p_j / n)$.
\end{theorem}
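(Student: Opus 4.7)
The plan is to adapt the classical shrinking-neighborhood argument for penalized M-estimators (Fan and Li, 2001; Rothman, Bickel, Levina, Zhu, 2008). For each block $j$, the two sub-problems \eqref{inverse covariance:eq3} and \eqref{inverse covariance:eq4} are convex in their free parameter. It therefore suffices to exhibit a Frobenius radius $r_j$ such that on the sphere $\{\bm\Delta:\|\bm\Delta\|_F=r_j\}$ the change in the sub-objective is strictly positive with probability tending to one; convexity then forces a local minimum inside the ball. Throughout I use that \eqref{assumption1} passes to each block: the eigenvalues of $\bm D_{j0}$ and of the sub-block of $\bm\Sigma$ indexed by the first $j-1$ groups stay in $[1/\theta,\theta]$.

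\textbf{Part (a).} Fix $(\bm D_j^{-1})_*$ from the algorithm path and set $\bm\Delta=\bm A_j-\bm A_{j0}$, $\bm A_{j0}=-\bm T_{j0}$. Because the data term of $\ell_\lambda(\bm A_j\mid(\bm D_j^{-1})_*)$ is quadratic in $\bm A_j$, a direct expansion yields
\begin{align*}
\ell_\lambda(\bm A_{j0}+\bm\Delta)-\ell_\lambda(\bm A_{j0})
=\tr\bigl[\bm\Delta\,\bm S_Z^{(j)}\,\bm\Delta'(\bm D_j^{-1})_*\bigr]-2\,\tr\bigl[(\bm D_j^{-1})_*\bm R^{(j)}\bm\Delta'\bigr]+\lambda_1\bigl(\|\bm A_{j0}+\bm\Delta\|_1-\|\bm A_{j0}\|_1\bigr),
\end{align*}
where $\bm S_Z^{(j)}=(\mathbb{Z}^{(j)})'\mathbb{Z}^{(j)}/n$ and $\bm R^{(j)}=(\mathbb{Z}^{(j)})'\bm\epsilon^{(j)}/n$. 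The quadratic term is bounded below by $\varphi_p((\bm D_j^{-1})_*)\varphi_p(\bm S_Z^{(j)})\|\bm\Delta\|_F^2$; standard Wishart concentration together with \eqref{assumption1} gives $\varphi_p(\bm S_Z^{(j)})\ge 1/(2\theta)$ with probability $1-o(1)$ whenever $n\gtrsim\sum_{k<j}p_k$, and Step~0 plus \eqref{assumption1} preserves $\varphi_p((\bm D_j^{-1})_*)\ge c>0$. Gaussian tail bounds give $\|\bm R^{(j)}\|_{\max}=O_p(\sqrt{\log(\sum_{k=1}^j p_k)/n})$, so the cross term is at most $2\|\bm R^{(j)}\|_{\max}\|\bm\Delta\|_1$. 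The penalty inequality $\|\bm A_{j0}+\bm\Delta\|_1-\|\bm A_{j0}\|_1\ge\|\bm\Delta_{\mathcal{Z}_{T_j}^c}\|_1-\|\bm\Delta_{\mathcal{Z}_{T_j}}\|_1$, combined with $\lambda_1\asymp\sqrt{\log(\sum_{k=1}^j p_k)/n}$, confines $\bm\Delta$ to a cone on which $\|\bm\Delta\|_1\le 4\sqrt{s_{T_j}}\|\bm\Delta\|_F$. Balancing the $c\|\bm\Delta\|_F^2$ lower bound against the $C\lambda_1\sqrt{s_{T_j}}\|\bm\Delta\|_F$ upper bound produces the stated rate.

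\textbf{Part (b).} Fix $(\bm A_j)_*$ and write $\bm\Theta=\bm D_j^{-1}$, $\bm\Theta_0=\bm D_{j0}^{-1}$, $\bm\Delta=\bm\Theta-\bm\Theta_0$. A Taylor expansion of $-\log\det\bm\Theta+\tr[(\bm S_{\epsilon_j})_*\bm\Theta]$ about $\bm\Theta_0$ gives a leading quadratic form $\tfrac12\vec(\bm\Delta)'(\bm D_{j0}\otimes\bm D_{j0})\vec(\bm\Delta)$, which is bounded below by $\|\bm\Delta\|_F^2/(2\theta^2)$ by \eqref{assumption1}, plus a gradient term $\tr[((\bm S_{\epsilon_j})_*-\bm D_{j0})\bm\Delta]$ whose control reduces to bounding $\|(\bm S_{\epsilon_j})_*-\bm D_{j0}\|_{\max}$. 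Splitting $\bm\Delta$ into its diagonal (contributing $p_j$ entries), the off-diagonal support $\mathcal{Z}_{D_j}$ of size $s_{D_j}$, and the off-diagonal zero block that is penalized by $\lambda_2$, and following the argument of Lam and Fan (2009) with $\lambda_2\asymp\sqrt{\log p_j/n}$, yields the rate $(s_{D_j}+p_j)\log p_j/n$.

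\textbf{Main obstacle.} The chief subtlety is that the nuisance parameters $(\bm D_j^{-1})_*$ and $(\bm A_j)_*$ are themselves random iterates of Algorithm~1, so $(\bm S_{\epsilon_j})_*$ is not the oracle residual covariance and $\bm R^{(j)}$ mixes true errors with estimation noise. The key technical step is to show that, uniformly along the path, $(\bm S_{\epsilon_j})_*-\bm D_{j0}$ and the analogous quantity in part (a) still satisfy the elementwise $O_p(\sqrt{\log p/n})$ bound. This is handled by writing each as the oracle object plus a bilinear correction in $(\bm A_j)_*-\bm A_{j0}$, whose Frobenius norm is controlled by the already-established rate from part (a) together with the eigenvalue control \eqref{assumption1}; the separable structure of \eqref{eq:obj} decouples the induction across $j$. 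The remaining bookkeeping is routine and will be carried out in the Appendix.
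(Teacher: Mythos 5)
Your part (b) is essentially the paper's own argument (a Lam--Fan/Rothman-type expansion of $-\log\det+\mathrm{trace}$ around $\bm D_{j0}^{-1}$, curvature bounded below via \eqref{assumption1}, the linear random term controlled elementwise, and the penalty split over $\mathcal{Z}_{D_j}$ and its complement), and your treatment of the plug-in quantity $(\bm S_{\epsilon_j})_\ast$ is at roughly the same level of rigor as the paper's, which likewise absorbs the discrepancy between $\bm T_\ast$ and $\bm T_0$ into terms controlled along the algorithm path. The genuine gap is in part (a): your curvature bound rests on $\varphi_p(\bm S_Z^{(j)})\ge 1/(2\theta)$ for the sample Gram matrix $\bm S_Z^{(j)}=(\mathbb{Z}^{(j)})'\mathbb{Z}^{(j)}/n$, which you justify by ``standard Wishart concentration \ldots whenever $n\gtrsim\sum_{k<j}p_k$.'' That side condition is not among the hypotheses of Theorem \ref{theorem1} and is violated in exactly the regime the paper targets: the cumulative dimension $\sum_{k<j}p_k$ may exceed $n$ (the MCD special case $M=p$ with $p>n$, and the simulations with $n=50$, $p=200$), in which case $\bm S_Z^{(j)}$ is singular, its smallest eigenvalue is exactly zero, and your quadratic lower bound --- hence the positivity of the objective difference on the sphere --- collapses. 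Noting that the penalty confines $\bm\Delta$ to a cone does not rescue this as written, because you apply the unrestricted minimum eigenvalue rather than a restricted-eigenvalue/compatibility condition; establishing an RE property for $\bm S_Z^{(j)}$ on that cone would itself require a separate argument (and a sample-size condition of the form $n\gtrsim s_{T_j}\log(\sum_{k\le j}p_k)$, not $n\gtrsim\sum_{k<j}p_k$), which you neither state nor prove.

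The paper avoids this issue entirely: in its decomposition $G_1(\Delta_T)=M_1+M_2+M_3$, the curvature comes from the deterministic term $M_2=\tr[\bm D_\ast^{-1}(\bm T\bm\Sigma_0\bm T'-\bm T_0\bm\Sigma_0\bm T_0')]$, whose lower bound $\|\Delta_T\|_F^2/h^4$ uses only the population eigenvalue condition \eqref{assumption1} (through Lemma \ref{lemma2}), while the sample randomness enters only through $M_1$, which involves $\bm S-\bm\Sigma_0$ and is bounded by its elementwise maximum, $O_p(\sqrt{\log p/n})$, multiplied by $\ell_1$-norms split over $\mathcal{Z}_{T_j}$ and $\mathcal{Z}_{T_j}^c$, the off-support part being dominated by the $\lambda_1$ penalty. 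No lower bound on the sample Gram matrix is ever needed. To repair your proof you should either adopt this population-curvature/elementwise-deviation decomposition or explicitly add and verify a restricted-eigenvalue condition for $\mathbb{Z}^{(j)}$ on the relevant cone; as it stands, part (a) does not go through under the theorem's stated assumptions.
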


\begin{theorem}{\label{theorem2}}
Let $\hat{\bm \Omega}$ be the estimate of $\bm \Omega$ obtained by Algorithm 1.
Assume all the assumptions in Theorem \ref{theorem1} hold, then we have
\begin{align*}
\| \hat{\bm \Omega} - \bm \Omega_{0} \|_{F} = O_p \left( \sqrt{\frac{s_{T} \log p + \sum_{j=1}^M (s_{D_{j}} + p_{j}) \log p_{j} }{n}} \right).
\end{align*}
Under the condition $s_{T} \log p + \sum_{j=1}^M (s_{D_{j}} + p_{j}) \log p_{j} = o(n)$,
$\| \hat{\bm \Omega} - \bm \Omega_{0} \|_{F} \stackrel{P}{\rightarrow} 0$.
\end{theorem}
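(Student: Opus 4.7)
The plan is to start from the algebraic identity
\begin{align*}
\hat{\bm \Omega} - \bm \Omega_0
&= \hat{\bm T}' \hat{\bm D}^{-1} \hat{\bm T} - \bm T_0' \bm D_0^{-1} \bm T_0 \\
&= (\hat{\bm T} - \bm T_0)' \hat{\bm D}^{-1} \hat{\bm T} + \bm T_0' \hat{\bm D}^{-1} (\hat{\bm T} - \bm T_0) + \bm T_0' (\hat{\bm D}^{-1} - \bm D_0^{-1}) \bm T_0,
\end{align*}
obtained by adding and subtracting intermediate products. Applying the Frobenius/operator norm inequality $\|\bm A \bm B \bm C\|_F \leq \|\bm A\|_{op} \|\bm B\|_F \|\bm C\|_{op}$ term by term reduces Theorem \ref{theorem2} to (i) controlling the operator norms $\|\bm T_0\|_{op}, \|\bm D_0^{-1}\|_{op}, \|\hat{\bm T}\|_{op}, \|\hat{\bm D}^{-1}\|_{op}$, and (ii) aggregating the per-block bounds of Theorem \ref{theorem1} into global Frobenius rates for $\hat{\bm T} - \bm T_0$ and $\hat{\bm D}^{-1} - \bm D_0^{-1}$.

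For the operator norms, I would exploit assumption \eqref{assumption1}: since each $\bm D_{j0}$ equals the conditional covariance of $\bm X^{(j)}$ given its preceding groups, a Schur-complement/eigenvalue-interlacing argument applied to $\bm \Sigma_0$ yields $1/\theta \leq \lambda_{\min}(\bm D_{j0}) \leq \lambda_{\max}(\bm D_{j0}) \leq \theta$, whence $\|\bm D_0^{-1}\|_{op} \leq \theta$ uniformly in $j$. Combining $\bm \Omega_0 = \bm T_0' \bm D_0^{-1} \bm T_0 \preceq \theta \bm I$ with $\bm D_0^{-1} \succeq (1/\theta)\bm I$ then gives $\|\bm T_0\|_{op}^2 \leq \theta^2$. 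The empirical counterparts $\|\hat{\bm T}\|_{op}$ and $\|\hat{\bm D}^{-1}\|_{op}$ are bounded with probability tending to one by the triangle inequality, using that the Frobenius deviations from Theorem \ref{theorem1} are $o_p(1)$ under the stated tuning-parameter orders and sparsity conditions.

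For the aggregation, the block lower-triangular structure of $\hat{\bm T} - \bm T_0$ (its $j$-th block row is $-(\hat{\bm A}_j - \bm A_{j0})$) gives $\|\hat{\bm T} - \bm T_0\|_F^2 = \sum_{j=2}^M \|\hat{\bm T}_j - \bm T_{j0}\|_F^2$, so Theorem \ref{theorem1}(a) combined with $\log(\sum_{k=1}^j p_k) \leq \log p$ yields $\|\hat{\bm T} - \bm T_0\|_F^2 = O_p(s_T \log p / n)$. The block-diagonal structure of $\bm D^{-1}$ gives $\|\hat{\bm D}^{-1} - \bm D_0^{-1}\|_F^2 = \sum_{j=1}^M \|\hat{\bm D}_j^{-1} - \bm D_{j0}^{-1}\|_F^2$, and the identity $\hat{\bm D}_j^{-1} - \bm D_{j0}^{-1} = -\hat{\bm D}_j^{-1}(\hat{\bm D}_j - \bm D_{j0})\bm D_{j0}^{-1}$ together with the operator-norm bounds from the previous step transfers Theorem \ref{theorem1}(b) into $\|\hat{\bm D}^{-1} - \bm D_0^{-1}\|_F^2 = O_p\bigl(\sum_j (s_{D_j} + p_j) \log p_j / n\bigr)$. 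Substituting both into the three-term decomposition and applying the triangle inequality produces the announced rate, and convergence in probability is then immediate from $s_T \log p + \sum_j (s_{D_j}+p_j)\log p_j = o(n)$.

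The main obstacle I anticipate is justifying the uniform spectral boundedness of $\bm D_{j0}$ (and hence of $\bm T_0$) from assumption \eqref{assumption1} in the regime where both $p$ and $M$ diverge: one must carefully invoke the conditional-variance interpretation of the block MCD and verify that the eigenvalue interlacing is genuinely uniform across blocks rather than picking up a factor that depends on $M$. Once those operator-norm constants are secured, the remainder of the argument reduces to routine Cauchy--Schwarz and matrix-perturbation bookkeeping.
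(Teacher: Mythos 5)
Your proposal is correct and follows the same overall architecture as the paper's proof: expand $\hat{\bm \Omega} - \bm \Omega_{0} = \hat{\bm T}'\hat{\bm D}^{-1}\hat{\bm T} - \bm T_{0}'\bm D_{0}^{-1}\bm T_{0}$ into a sum of terms each containing exactly one "error factor" ($\hat{\bm T}-\bm T_{0}$ or $\hat{\bm D}^{-1}-\bm D_{0}^{-1}$), control the remaining factors in operator norm, and aggregate the block rates of Theorem \ref{theorem1} using $\log(\sum_{k\le j}p_k)\le\log p$ and the block-diagonal/block-triangular structure. Your three-term identity is algebraically equivalent to the paper's four-term expansion in $\bm\Delta_T$. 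Where you genuinely diverge is in how the spectral bounds are obtained, and your route is arguably sharper: you bound $\lambda_{\min}(\bm D_{j0})$ and $\lambda_{\max}(\bm D_{j0})$ uniformly by $1/\theta$ and $\theta$ via the Schur-complement/interlacing interpretation of the conditional covariances, which immediately yields the dimension-free bounds $\|\bm D_{0}^{-1}\|\le\theta$ and $\|\bm T_{0}\|^2\le\theta^2$. The paper instead relies on its Lemma 3 (Lemma \ref{lemma2}), whose determinant-based argument only delivers $\varphi_p(\bm D^{-1})\ge\theta^{1-2p}$ and $\varphi_1(\bm T)\le\theta^{p}$ — bounds that degrade with $p$ — and then asserts $\|\bm T_0\|=O(1)$ anyway; your interlacing argument actually repairs this, so the "main obstacle" you flag at the end is one you have already resolved correctly.

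One small piece of the paper's proof that your sketch skips: Theorem \ref{theorem1} is stated for the conditional subproblems given arbitrary iterates $(\bm A_j)_\ast$, $(\bm D_j^{-1})_\ast$ on the algorithm's path, and the paper opens its proof of Theorem \ref{theorem2} with a short induction over the iterations of Algorithm 1 to argue that the \emph{final} outputs $\hat{\bm A}_j$, $\hat{\bm D}_j$ inherit the rates $s_{T_j}\log(\sum_{k\le j}p_k)/n$ and $(s_{D_j}+p_j)\log(p_j)/n$. You implicitly assume the final estimates satisfy these rates; you should state this bridging step (even at the same informal level the paper does) before aggregating. With that addition, the rest of your argument — including the conversion $\hat{\bm D}_j^{-1}-\bm D_{j0}^{-1}=-\hat{\bm D}_j^{-1}(\hat{\bm D}_j-\bm D_{j0})\bm D_{j0}^{-1}$ and the final triangle inequality — matches the paper and is sound.
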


Theorem \ref{theorem1} provides asymptotic consistent rates of the estimators $\hat{\bm T}_j$ and $\hat{\bm D}_j$ that are obtained by minimizing $\ell_{\lambda}(\bm A_j | (\bm D_{j}^{-1})_\ast)$ in \eqref{inverse covariance:eq3} and minimizing $\ell_{\lambda}(\bm D_j^{-1} | (\bm A_{j})_\ast)$ in \eqref{inverse covariance:eq4}.
Theorem \ref{theorem2} establishes the property for the inverse covariance estimate $\hat{\bm \Omega}$ obtained from the proposed Algorithm 1, which demonstrates the consistent rate of our estimate in practice.
Moreover, we would like to have the following remarks.

First, from Theorem \ref{theorem1} we have $\| \hat{\bm D}^{-1} - \bm D_{0}^{-1} \|_{F}^2 = \| \hat{\bm D} - \bm D_{0} \|_{F}^2 = \sum_{j=1}^M \| \hat{\bm D_j} - \bm D_{j0} \|_{F} = O_p(\sum_{j=1}^M (s_{D_{j}} + p_{j}) \log (p_{j}) / n)$.
Such rate is sharper than the theoretical Glasso consistent rate $(\tilde{s} + p) \log (p) / n$ ($\tilde{s}$ is a measure of sparsity), which has been obtained in the literature \citep{lam2009sparsistency}, because of  $\sum_{j=1}^M p_j \log p_j \leq p \log p$ resulting from $\prod_{j=1}^M p_j^{p_j} \leq \prod_{j=1}^M p_{\max}^{p_j} = p_{\max}^{p} \leq p^{p}$, where $p_{\max} = \max \{p_j\}^{M}_{j=1}$.
The equality holds only in the situation that all the $p$ variables are in one group, where it is exactly a Glasso problem.
This implies that, as long as there are at least two groups of variables, the proposed BCD model with partial information of variable ordination is useful in reducing the consistent rate.

Second, the estimation of $\bm D$ is decomposed into $M$ separate Glasso estimations in the proposed method.
According to the log sum inequality, we have $\sum_{j=1}^M p_j \log p_j \geq p \log (p/M)$.
The equality holds when $p_j = p/M$ for all $j = 1, 2, \ldots, M$, indicating that the lowest bound of the proposed estimator's consistent rate would be achieved when each group has the equal number of variables.
In view of this, we would like to point out that:
(i) a larger number of variable groups leads to a smaller value of the lowest bound of the consistent rate;
(ii) the more evenly that $p$ variables are assigned into $M$ groups, the more closely that the consistent rate tends to the lowest bound.

Third, the requirement $s_{T} \log p + \sum_{j=1}^M (s_{D_{j}} + p_{j}) \log p_{j} = o(n)$ in Theorem \ref{theorem2} is a relatively weaker condition compared with the assumption for the Glasso model.
Note that $\| \hat{\bm D} - \bm D_{0} \|^2_{F} \leq O_p(\sum_{j=1}^M (s_{D_{j}} + p_{j}) \log (p_{\max}) / n) = O_p((s_{D} + p) \log (p_{\max}) / n)$, then we may require a stronger condition $s_{T} \log p + (s_{D} + p) \log p_{\max} = o(n)$.
Moreover, if we further loose the upper bound of $\| \hat{\bm D} - \bm D_{0} \|^2_{F} \leq O_p((s_{D} + p) \log (p) / n)$, the proposed BCD model needs an even stronger condition $(s_{T} + s_{D} + p) \log p = o(n)$, which however is the similar condition as that for the Glasso estimator.

\section{Numerical Study}\label{sec:simulation}
In this section, we conduct simulation studies to evaluate the performance of the proposed BCD model (Prop) in comparison with several existing methods, including the MCD method, Glasso, SCIO \citep{liu2015fast} and CSCS \citep{khare2019scalable}.
The MCD, where $M = p$, reduces Equation \eqref{inverse covariance:eq1} to $(p-1)$ univariate linear regressions with their coefficients estimated by Lasso.
The Glasso and SCIO are two popular methods, but do not consider ordering information.
The CSCS is suitable for data with full ordering information and implemented via  classical Cholesky decomposition.
The tuning parameters in the Glasso, SCIO and CSCS methods are selected based on BIC.
Besides, to examine whether the proposed model is permutation invariant within each group, we also implement the Prop$^\ast$ method, which estimates $\bm \Omega$ by Algorithm 1 from data which randomly permutate variables within each group.

The data are independently generated from $\mathcal{N}_{p}(\bm 0, \bm \Omega^{-1})$ with sample size $n = 50$ and number of variables $p = 200$.
We consider two different cases of variable groups: (1) five groups with each containing 40 variables; (2) four groups with each subsequently containing 30, 60, 40 and 70 variables.
Let $\mbox{AR}(\rho)$ represent a squared matrix with autoregressive structure with $(i,j)$th entry as $\rho^{|i - j|}$, $1 \leq i, j \leq p$.
Let $\mbox{MA}(0.5,0.4,0.3)$ indicate a squared banded matrix with the main diagonal elements 1, and the subsequent sub-diagonal elements are 0.5, 0.4 and 0.3 respectively.
Denote an $a \times b$ non-squared matrix $\widetilde{\mbox{AR}}(\rho) = \left( \mbox{AR}(\rho), \bm 0 \right)$ if $a < b$, and $\widetilde{\mbox{AR}}(\rho) = \left(
\begin{array}{ccccc}
\mbox{AR}(\rho) \\
\bm 0
\end{array}
\right)$ otherwise, where $\bm 0$ represents the matrix with all elements 0.
Similarly, Denote an $a \times b$ non-squared matrix $\widetilde{\mbox{MA}}(0.5,0.4,0.3) = \left( \mbox{MA}(0.5,0.4,0.3), \bm 0 \right)$ if $a < b$, and $\widetilde{\mbox{MA}}(0.5,0.4,0.3) = \left(
\begin{array}{ccccc}
\mbox{MA}(0.5,0.4,0.3) \\
\bm 0
\end{array}
\right)$ otherwise.
To systematically investigate the performance of the proposed method, we consider the following different structures of inverse covariance matrix $\bm \Omega$.

\begin{itemize}
\item $\textbf{Scenario 1}$. $\bm \Omega_{1} = \mbox{AR}(0.8)$.


\item $\textbf{Scenario 2}$. $\bm \Omega_{2} = \left(
\begin{array}{cccccccc}
    \mbox{AR}(0.5)            & \ldots & \bm 0 \\
    \vdots                    & \ddots & \bm 0 \\
    \bm 0                     &\ldots  & \mbox{AR}(0.5)
  \end{array}
\right)$.

\item $\textbf{Scenario 3}$.

$\bm \Omega_{3} = \left(
\begin{array}{cccccccc}
\mbox{MA}(0.5,0.4,0.3)&\widetilde{\mbox{AR}}(0.5)&\ldots&\widetilde{\mbox{AR}}(0.5) \\
\widetilde{\mbox{AR}}(0.5)&\mbox{MA}(0.5,0.4,0.3)&\ldots&\widetilde{\mbox{AR}}(0.5) \\
\vdots&\widetilde{\mbox{AR}}(0.5)&\ddots&\widetilde{\mbox{AR}}(0.5) \\
\widetilde{\mbox{AR}}(0.5)&\widetilde{\mbox{AR}}(0.5)&\vdots&\mbox{MA}(0.5,0.4,0.3)
\end{array}
\right)$ + $\bm \alpha \bm I$.
The value of $\bm \alpha$ is gradually increased to ensure that $\bm \Omega_{3}$ is positive definite.

\item $\textbf{Scenario 4}$. $\bm \Omega_{4} = \tilde{\bm \Omega}_{4} + \bm \alpha \bm I$, where $\tilde{\bm \Omega}_{4}$ is generated by randomly permuting rows and corresponding columns of each block of \\
$\left(
\begin{array}{cccccccc}
\mbox{AR}(0.5)&\widetilde{\mbox{MA}}(0.5,0.4,0.3)&\ldots&\widetilde{\mbox{MA}}(0.5,0.4,0.3) \\
\widetilde{\mbox{MA}}(0.5,0.4,0.3)&\mbox{AR}(0.5)&\ldots&\widetilde{\mbox{MA}}(0.5,0.4,0.3) \\
\vdots& \widetilde{\mbox{MA}}(0.5,0.4,0.3)&\ddots&\widetilde{\mbox{MA}}(0.5,0.4,0.3) \\
\widetilde{\mbox{MA}}(0.5,0.4,0.3)&\widetilde{\mbox{MA}}(0.5,0.4,0.3)&\vdots&\mbox{AR}(0.5)
\end{array}
\right)$.
The value of $\bm \alpha$ is gradually increased to ensure that $\bm \Omega_{4}$ is positive definite.

\item $\textbf{Scenario 5}$. $\bm \Omega_{5} = \bm B' \bm H \bm B$, where $\bm H$ is a block diagonal matrix with its each diagonal block as AR(0.5), and $\bm B = (b_{i,j})$ with $b_{i,i} = 1, b_{i+20,i} = -0.8$ and $b_{i,j} = 0$ otherwise.

\item $\textbf{Scenario 6}$. $\bm \Omega_{6} = \bm B' \bm H \bm B$, where $\bm H$ is a block diagonal matrix with its each diagonal block being MA(0.5,0.4,0.3), and $\bm B = (b_{i,j})$ with $b_{i,i} = 1, b_{i+20,i} = -0.8, b_{i+21,i} = 0.5$ and $b_{i,j} = 0$ otherwise.

\item $\textbf{Scenario 7}$. $\bm \Omega_{7} = \tilde{\bm \Omega}_{7} + \alpha \bm I$. Here the diagonal elements of $\tilde{\bm \Omega}_{7}$ are 0, and each off-diagonal element is generated independently as $(\tilde{\bm \Omega}_{7})_{ij} = (\tilde{\bm \Omega}_{7})_{ji} = b * Unif(-1, 1)$, where $b$ is a Bernoulli random variable with probability 0.15 equal 1.
    The value of $\alpha$ increases gradually to make sure $\bm \Omega_{7}$ is positive definite.
\end{itemize}

$\textbf{Scenario 1}$ is the AR structure with the variables' correlations decaying when they are far apart from each other.
$\textbf{Scenarios 2}$ is a block diagonal matrix with multiple groups of variables, where the variables in different groups are independent.
$\textbf{Scenarios 3}$ and $\textbf{4}$ are block matrices with multiple groups of variables with variables in different groups possibly correlated.
$\textbf{Scenarios 5}$ and $\textbf{6}$ are similarly used in \cite{huang2006covariance}.
$\textbf{Scenario 7}$ is a general sparse matrix with random structure.

To evaluate the accuracy of each estimate $\hat{\bm \Omega} = (\hat{\omega}_{ij})$ for the underlying inverse covariance matrix $\bm \Omega = (\omega_{ij})$, we consider the loss measures $L_1$, the matrix spectral norm $L_2$, the Frobenius norm F of $(\bm \Omega - \hat{\bm \Omega})$ as follows
\begin{align*}
L_1 = \max_{ j } \sum_{i} | \hat{\omega}_{ij} - \omega_{ij} |, ~~~
L_2 = \lambda_{\max}[(\bm \Omega - \hat{\bm \Omega})], ~~~
\mbox{Fnorm} = \sqrt{\sum_{i=1}^{p} \sum_{j=1}^{p} (\hat{\omega}_{ij} - \omega_{ij})^2},
\end{align*}
where $\lambda_{\max}$ is the maximum value of eigenvalues.
We also use the Kullback-Leibler loss (KL) as well as the quadratic loss (QL)
\begin{align*}
\mbox{KL} = \frac{1}{p}~(\tr [\bm \Omega^{-1} \hat{\bm \Omega}] - \log |\bm \Omega^{-1} \hat{\bm \Omega}| - p), ~~~
\mbox{QL} = \frac{1}{p}~\tr (\bm \Omega^{-1} \hat{\bm \Omega} - \bm I)^2.
\end{align*}
In addition, to gauge the ability of the proposed model to capture the underlying sparse structure, we report the false selection loss FSL = (FP + FN) / $p^2$ in percentage, where FP is the false positive and FN is the false negative.
The simulation results of loss measures for each method are summarized in Tables 1 and 2, reporting their averages and corresponding standard errors (in parenthesis) over 50 replicates.

\begin{table}
\begin{center}
\caption{The averages and standard errors of estimates for 5 groups of variables.}\label{table:5group}
\resizebox{\textwidth}{!}{ 
\begin{tabular}{crrrrrrrrrrrrrrrr}
\hline
\textbf{Scenario}& &$L_1$ &$L_2$ &Fnorm &KL &QL &FSL (\%)\\\hline
\multirow {6}*{1}
&MCD        & 50.25 (6.628) & 16.66 (2.187) & 196.5 (24.94) & 4.000 (0.670) & 4.738 (1.027) & 34.75 (0.302) \\
&Glasso     & 8.852 (0.001) & 8.717 (0.001) & 28.61 (0.004) & 0.518 (0.001) & 0.601 (0.004) & 27.67 (0.001) \\
&SCIO       & 8.843 (0.002) & 8.685 (0.001) & 28.43 (0.004) & 0.451 (0.003) & 0.655 (0.005) & 27.35 (0.006) \\
&CSCS       & 8.750 (0.004) & 8.577 (0.002) & 28.06 (0.009) & 0.415 (0.002) & 0.505 (0.005) & 27.30 (0.019) \\
&Prop       & 8.768 (0.004) & 8.581 (0.001) & 27.87 (0.006) & 0.363 (0.001) & 0.397 (0.004) & 25.89 (0.022) \\
\midrule
\multirow {6}*{2}
&MCD        & 41.23 (15.07) & 32.46 (4.881) & 34.83 (5.265) & 0.366 (0.054) & 2.921 (0.699) & 29.53 (0.580) \\
&Glasso     & 2.613 (0.004) & 2.379 (0.003) & 12.57 (0.009) & 0.242 (0.001) & 0.410 (0.003) & 7.875 (0.001) \\
&SCIO       & 2.618 (0.005) & 2.363 (0.003) & 12.48 (0.010) & 0.240 (0.003) & 0.419 (0.003) & 7.769 (0.003) \\
&CSCS       & 2.586 (0.006) & 2.324 (0.005) & 12.20 (0.031) & 0.222 (0.002) & 0.390 (0.003) & 7.711 (0.010) \\
&Prop       & 2.539 (0.007) & 2.259 (0.002) & 11.78 (0.008) & 0.195 (0.001) & 0.340 (0.003) & 7.343 (0.006) \\
\midrule
\multirow {6}*{3}
&MCD        & 61.20 (5.725) & 205.6 (20.88) & 245.2 (25.91) & 4.671 (0.303) & 28.53 (6.462) & 49.24 (0.163) \\
&Glasso     & 9.592 (0.001) & 9.414 (0.001) & 25.87 (0.004) & 0.778 (0.001) & 2.952 (0.023) & 36.35 (0.001) \\
&SCIO       & 9.579 (0.002) & 9.348 (0.001) & 25.62 (0.004) & 0.703 (0.003) & 3.506 (0.029) & 35.99 (0.007) \\
&CSCS       & 9.543 (0.007) & 9.153 (0.005) & 25.14 (0.016) & 0.591 (0.003) & 1.604 (0.024) & 34.72 (0.032) \\
&Prop       & 9.530 (0.005) & 9.140 (0.001) & 25.09 (0.003) & 0.424 (0.001) & 0.681 (0.009) & 36.99 (0.018) \\
\midrule
\multirow {6}*{4}
&MCD        & 50.70 (6.839) & 16.30 (2.144) & 19.04 (2.179) & 3.108 (0.210) & 54.44 (11.78)  & 47.52 (0.204) \\
&Glasso     & 4.221 (0.031) & 3.474 (0.012) & 11.33 (0.111) & 0.568 (0.041) & 10.79 (2.661) & 21.14 (0.079) \\
&SCIO       & 3.936 (0.001) & 3.546 (0.001) & 11.91 (0.012) & 0.597 (0.083) & 9.475 (1.973) & 19.70 (0.001) \\
&CSCS       & 6.880 (0.140) & 3.300 (0.006) & 10.12 (0.032) & 0.566 (0.019) & 2.601 (0.067) & 22.23 (0.109) \\
&Prop       & 4.341 (0.010) & 3.373 (0.002) & 10.50 (0.008) & 0.466 (0.001) & 2.075 (0.043) & 21.12 (0.014) \\
\midrule
\multirow {6}*{5}
&MCD        & 136.7 (13.75) & 50.75 (5.017) & 67.67 (5.552) & 1.282 (0.086) & 47.12 (6.803) & 31.27 (0.281) \\
&Glasso     & 9.354 (0.005) & 8.612 (0.005) & 30.83 (0.029) & 0.626 (0.006) & 8.278 (0.366) & 23.54 (0.013) \\
&SCIO       & 9.543 (0.002) & 8.841 (0.002) & 32.11 (0.010) & 0.637 (0.008) & 3.559 (0.034) & 23.85 (0.001) \\
&CSCS       & 9.532 (0.014) & 8.540 (0.015) & 29.38 (0.075) & 0.966 (0.014) & 1.786 (0.013) & 22.77 (0.043) \\
&Prop       & 9.311 (0.007) & 8.390 (0.005) & 29.77 (0.026) & 0.614 (0.003) & 0.665 (0.085) & 23.09 (0.055) \\
\midrule
\multirow {6}*{6}
&MCD        & 34.50 (2.309) & 13.79 (0.952) & 27.24 (0.814) & 1.493 (0.016) & 20.98 (0.858)  & 25.59 (0.234) \\
&Glasso     & 8.291 (0.004) & 5.340 (0.005) & 25.74 (0.028) & 1.052 (0.009) & 1.717 (0.121) & 12.45 (0.027) \\
&SCIO       & 10.52 (0.713) & 7.234 (0.382) & 26.68 (0.478) & 1.243 (0.083) & 1.494 (0.062) & 10.70 (0.001) \\
&CSCS       & 7.987 (0.017) & 4.921 (0.008) & 21.62 (0.044) & 1.278 (0.039) & 1.275 (0.018) & 11.33 (0.111) \\
&Prop       & 7.724 (0.024) & 5.003 (0.015) & 22.92 (0.103) & 1.062 (0.008) & 0.803 (0.258) & 10.31 (0.064) \\
\midrule
\multirow {6}*{7}
&MCD        & 78.65 (9.265) & 25.19 (2.910) & 28.24 (2.983) & 4.042 (0.291) & 11.57 (2.259)  & 48.25 (0.290) \\
&Glasso     & 5.112 (0.054) & 1.638 (0.005) & 10.88 (0.041) & 0.537 (0.010) & 3.615 (0.313) & 16.97 (0.027) \\
&SCIO       & 4.502 (0.008) & 1.663 (0.001) & 10.74 (0.017) & 0.547 (0.008) & 3.227 (0.412) & 14.86 (0.010) \\
&CSCS       & 5.776 (0.099) & 1.580 (0.005) & 9.346 (0.034) & 0.524 (0.007) & 2.444 (0.081) & 17.16 (0.092) \\
&Prop       & 4.618 (0.020) & 1.540 (0.003) & 9.701 (0.027) & 0.468 (0.002) & 2.129 (0.054) & 16.88 (0.028) \\
\hline
\end{tabular}}
\end{center}
\end{table}

\begin{table}
\begin{center}
\caption{The averages and standard errors of estimates for 4 groups of variables.}\label{table:4group}
\resizebox{\textwidth}{!}{ 
\begin{tabular}{crrrrrrrrrrrrrrrr}
\hline
\textbf{Scenario}& &$L_1$ &$L_2$ &Fnorm &KL &QL &FSL (\%)\\\hline
\multirow {6}*{1}
&MCD        & 48.46 (6.760) & 16.19 (2.258) & 178.1 (23.47) & 4.400 (0.556) & 3.809 (0.841)  & 33.46 (0.316) \\
&Glasso     & 8.850 (0.001) & 8.718 (0.001) & 28.62 (0.003) & 0.518 (0.001) & 0.597 (0.003) & 27.68 (0.001) \\
&SCIO       & 8.845 (0.002) & 8.683 (0.001) & 28.43 (0.004) & 0.489 (0.002) & 0.648 (0.004) & 27.35 (0.005) \\
&CSCS       & 8.771 (0.003) & 8.603 (0.002) & 28.05 (0.010) & 0.451 (0.003) & 0.497 (0.004) & 27.29 (0.022) \\
&Prop       & 8.785 (0.003) & 8.610 (0.003) & 28.04 (0.016) & 0.387 (0.002) & 0.421 (0.003)  & 26.43 (0.030) \\
\midrule
\multirow {6}*{2}
&MCD        & 23.83 (19.01) & 36.34 (6.053) & 38.72 (6.322) & 0.397 (0.060) & 3.993 (0.879)  & 28.31 (0.577) \\
&Glasso     & 2.612 (0.004) & 2.384 (0.002) & 12.60 (0.008) & 0.243 (0.001) & 0.418 (0.003) & 8.099 (0.001) \\
&SCIO       & 2.635 (0.019) & 2.367 (0.002) & 12.51 (0.008) & 0.242 (0.003) & 0.430 (0.004) & 7.998 (0.003) \\
&CSCS       & 2.605 (0.008) & 2.331 (0.005) & 12.26 (0.036) & 0.227 (0.002) & 0.400 (0.004) & 7.976 (0.015) \\
&Prop       & 2.573 (0.008) & 2.271 (0.004) & 11.88 (0.026) & 0.201 (0.002) & 0.354 (0.004)  & 7.687 (0.013) \\
\midrule
\multirow {6}*{3}
&MCD        & 16.27 (1.882) & 60.19 (7.668) & 81.22 (7.474) & 2.385 (0.119) & 16.14 (2.668)  & 41.68 (0.223) \\
&Glasso     & 7.953 (0.003) & 7.766 (0.007) & 22.23 (0.034) & 1.147 (0.033) & 7.825 (0.492) & 21.93 (0.030) \\
&SCIO       & 7.971 (0.055) & 7.668 (0.041) & 21.14 (0.042) & 1.337 (0.054) & 9.113 (0.230) & 21.38 (0.010) \\
&CSCS       & 7.845 (0.011) & 7.451 (0.009) & 20.01 (0.032) & 0.953 (0.048) & 1.564 (0.036) & 22.53 (0.115) \\
&Prop       & 7.817 (0.006) & 7.405 (0.002) & 20.78 (0.007) & 0.486 (0.002) & 0.789 (0.016)  & 23.73 (0.021) \\
\midrule
\multirow {6}*{4}
&MCD        & 26.29 (4.073) & 8.836 (1.365) & 11.09 (1.435) & 1.883 (0.142) & 18.57 (4.659)  & 39.62 (0.273) \\
&Glasso     & 4.284 (0.051) & 2.809 (0.008) & 9.456 (0.055) & 0.347 (0.015) & 2.629 (1.172) & 15.57 (0.037) \\
&SCIO       & 3.576 (0.001) & 2.939 (0.001) & 10.22 (0.012) & 0.405 (0.026) & 2.506 (1.088) & 14.13 (0.001) \\
&CSCS       & 5.846 (0.119) & 2.684 (0.005) & 8.746 (0.024) & 0.369 (0.010) & 1.782 (0.043) & 15.88 (0.046) \\
&Prop       & 4.491 (0.025) & 2.788 (0.003) & 9.024 (0.008) & 0.358 (0.001) & 1.487 (0.033)  & 15.82 (0.017) \\
\midrule
\multirow {6}*{5}
&MCD        & 69.74 (5.203) & 25.87 (1.396) & 56.76 (1.174) & 1.714 (0.097) & 68.35 (9.419) & 36.22 (0.221) \\
&Glasso     & 25.84 (0.003) & 20.54 (0.002) & 53.62 (0.008) & 0.824 (0.005) & 11.44 (0.379) & 35.91 (0.014) \\
&SCIO       & 25.96 (0.002) & 20.65 (0.001) & 54.02 (0.002) & 0.797 (0.008) & 4.084 (0.035) & 37.30 (0.001) \\
&CSCS       & 25.62 (0.007) & 20.35 (0.003) & 52.98 (0.010) & 1.057 (0.035) & 3.884 (0.011) & 34.60 (0.035) \\
&Prop       & 25.76 (0.004) & 20.42 (0.003) & 53.07 (0.013) & 0.734 (0.002) & 3.003 (0.200) & 33.54 (0.049) \\
\midrule
\multirow {6}*{6}
&MCD        & 38.13 (3.725) & 15.74 (1.756) & 29.40 (1.459) & 1.032 (0.022) & 22.71 (1.110)  & 25.72 (0.266) \\
&Glasso     & 8.295 (0.005) & 5.364 (0.005) & 25.81 (0.031) & 1.072 (0.010) & 1.355 (0.114) & 12.55 (0.031) \\
&SCIO       & 9.359 (0.798) & 6.426 (0.358) & 27.73 (0.641) & 1.106 (0.094) & 1.379 (0.056) & 10.99 (0.019) \\
&CSCS       & 7.985 (0.022) & 4.827 (0.009) & 21.68 (0.047) & 1.091 (0.018) & 1.308 (0.021) & 11.65 (0.128) \\
&Prop       & 7.659 (0.017) & 5.005 (0.009) & 22.86 (0.070) & 0.902 (0.005) & 0.726 (0.159)  & 11.08 (0.033) \\
\midrule
\multirow {6}*{7}
&MCD        & 40.16 (5.131) & 13.34 (1.641) & 16.00 (1.795) & 2.658 (0.200) & 37.25 (8.799)  & 47.28 (0.289) \\
&Glasso     & 5.944 (0.099) & 1.588 (0.006) & 10.53 (0.060) & 0.480 (0.015) & 4.094 (0.968) & 16.34 (0.038) \\
&SCIO       & 4.682 (0.001) & 1.619 (0.001) & 10.34 (0.014) & 0.483 (0.009) & 3.901 (0.876) & 15.13 (0.001) \\
&CSCS       & 8.205 (0.198) & 1.604 (0.038) & 9.022 (0.032) & 0.451 (0.006) & 1.756 (0.068) & 17.52 (0.070) \\
&Prop       & 4.992 (0.029) & 1.503 (0.001) & 9.063 (0.010) & 0.374 (0.001) & 1.312 (0.029)  & 17.15 (0.016) \\
\hline
\end{tabular}}
\end{center}
\end{table}

Overall, the Prop method gives relatively superior performance compared with other methods for the considered loss measures.
Especially it performs substantially well with respect to KL and QL for all the scenarios.
Note that the results of Prop$^\ast$ are omitted in the tables since they are exactly the same as the results of the Prop method.
It further confirms that the proposed BCD model is permutation invariant for the variables within each group.
In scenario 1, the CSCS and Prop are the best models.
The CSCS method is slight better than Prop regarding $L_1, L_2$ and Fnorm since it is designed for inverse covariance estimation of data with full information on variable ordering.
For other scenarios 2-7, we observe that (1) the Prop is at least comparable with or slightly better than CSCS method; (2) the Prop outperforms SCIO, Glasso and MCD methods for most loss measures.
In addition, the compared methods also show their advantages in some loss functions.
For example, the SCIO produces the lowest values in terms of $L_1$ and FSL for both of scenarios 4 and 7.
The CSCS gives the best performance in terms of Fnorm for scenarios 5, 6 and 7.
It is also superior over other methods regarding $L_2$ for scenarios 4 and 6.

Besides, one can see that the MCD approach does not perform well compared with other methods, possibly due to that most simulated data do not have a valid full information of natural variable ordering. It can make the MCD, the performance of which heavily depends on the variable ordering, being inferior to other methods.
Moreover, for Scenario 7 without any variable ordering, we observe that the Glasso can be better in certain criteria such as $L_1$ and $FSL$.
Especially, the Glasso is superior in FSL as expected since it is good at inducing sparsity.
Comparing with CSCS, the Glasso gives similar performance on $L_2$, although not as good in terms of Fnorm and QL.


\section{Case Studies}\label{sec:app}
\subsection{Covid-19 Data} \label{sec:covid}
We further evaluate the performance of the proposed model through a real data example of Covid-19 pandemic, which is available on the official website of Virginia health department.
The data were weekly collected during May 29th, 2020 to February 6th, 2021 (37 weeks) from 34 districts in Virginia State such as Arlington, Fairfax, Richmond, Roanoke, Virginia Beach, etc.
For each week the data contain four variables: the accumulative number of cases, the accumulative number of people hospitalized, the accumulative number of deaths and the accumulative number of people taking the PCR (polymerase chain reaction) tests,
resulting in $4 \times 37 = 148$ variables.


Denote the collected data by $\bm N_i = (N_{i1},\ldots,N_{i148})', i = 1, . . . , 34$,
and transform $y_{ij} = \sqrt{(N_{ij} + 1/4)}$ to make the data distribution close to normal \citep{brown2005statistical}.
We then apply the proposed model as well as the MCD, SCIO, CSCS and Glasso methods to estimate the $148 \times 148$ inverse covariance matrix.
To conduct data analysis, the 148 variables are partitioned into 37 groups with each group naturally corresponding to a calendar week.
Hence it is seen that the data have a partial information of the variable ordination in a weekly scale.
To examine the performance of methods in comparison, we predict the accumulative number of cases, the accumulative number of people hospitalized, the accumulative number of deaths and the accumulative number of people taking PCR in the last 2 weeks
using the inverse covariance estimates obtained from the data in the first 35 weeks.
Specifically, let $\bm y_i = (y_{i1},\ldots,y_{i148})' = (\bm y_{iE}', \bm y_{iL}')'$, where $\bm y_{iE}$ and $\bm y_{iL}$ represent the data in the first 35 weeks and last 2 weeks for the $i$th district, $i = 1, 2, \ldots , 34$.
Hence $\bm y_{iE}$ contains $4 \times 35 = 140$ variables and $\bm y_{iL}$ contains $4 \times 2 = 8$ variables.
Accordingly, the mean vector and the inverse covariance matrix are divided as
\begin{align*}
\bm \mu = \left[
\begin{array}{cc}
\bm \mu_{1} \\
\bm \mu_{2}
\end{array}\right]~~~\mbox{and}~~~
\bm \Omega= \left[
\begin{array}{cc}
\bm \Omega_{11},      & \bm \Omega_{12} \\
\bm \Omega_{12}',   & \bm \Omega_{22}
\end{array}\right].
\end{align*}
Assuming multivariate normality, we have
\begin{align}\label{eq:pred}
E(\bm y_{iL}|\bm y_{iE}) = \bm \mu_2 - \bm \Omega_{22}^{-1} \bm \Omega_{12}' (\bm y_{iE} - \bm \mu_1).
\end{align}
The 34 observations are split into a training set and a testing set using the leaving-one-out mechanism.
That is, each observation is considered as a testing point with the rest 33 observations as training set.
The training set is used to estimate the mean vector $\bm \mu$ and the inverse covariance matrix $\bm \Omega$.
The values of $\bm y_{iE} = (y_{i1}, \ldots, y_{i140})'$ in the testing data are used to predict $\bm y_{iL} = (y_{i141}, \ldots, y_{i148})'$ based on Equation \eqref{eq:pred}.
For each variable in $\bm y_{iL}$, define the average absolute prediction error (APE) by
\begin{align*}
\mbox{APE}_j = \frac{1}{34} \sum_{i=1}^{34}|\hat{y}_{ij} - y_{ij}|,~~~j = 141, \ldots, 148.
\end{align*}
where $\hat{y}_{ij}$ is the predicted value.

\begin{table}
\begin{center}
\caption{The averages ($\times 10^{-1}$) and standard errors ($\times 10^{-1}$) of APE for Covid-19 data.} \label{table:APE}
\resizebox{\textwidth}{!}{ 
\begin{tabular}{rrrrrrrrrrrrrrrrr}
\hline
  \multirow{2}*&\multicolumn{4}{c}{36th week} &&\multicolumn{4}{c}{37th week} \\
\cline{2-10}
 &Cases     &Hospitalizations    &Deaths      &PCR &&Cases     &Hospitalizations    &Deaths      &PCR  \\
\hline
MCD     &2.90 (0.26) &0.79 (0.11) &0.79 (0.11) &1.41 (0.16) &&1.65 (0.22) &0.87 (0.12) &0.75 (0.09) &2.20 (0.40) \\
Glasso  &3.57 (0.43) &1.03 (0.13) &1.02 (0.13) &5.50 (0.68) &&2.41 (0.32) &1.00 (0.10) &0.93 (0.11) &5.16 (0.75) \\
SCIO    &2.48 (0.32) &0.95 (0.13) &1.04 (0.12) &2.65 (0.34) &&2.22 (0.31) &0.99 (0.10) &0.93 (0.11) &3.02 (0.51) \\
CSCS    &2.54 (0.24) &1.06 (0.13) &0.79 (0.12) &1.50 (0.19) &&1.64 (0.19) &0.85 (0.10) &0.84 (0.08) &1.68 (0.40) \\
Prop    &2.39 (0.33) &0.80 (0.11) &0.75 (0.12) &1.33 (0.17) &&1.61 (0.24) &0.80 (0.09) &0.73 (0.08) &2.55 (0.39) \\
\hline
\end{tabular}}
\end{center}
\end{table}

Table \ref{table:APE} reports the values of APE$_j$ for $j = 141, \ldots, 148$, corresponding to the prediction measurements for weeks 36 and 37.
The columns ``Hospitalizations" and ``PCR" represent the variables that the number of people hospitalized, and the number of people attending the PCR tests, respectively.
We observe that the Prop outperforms the SCIO and Glasso, and is slightly better than the CSCS and MCD approaches.
Specifically, the Prop provides more accurate prediction in the number of cases, which is of practically importance for pandemic study on the risk assessment.
It also predicts well in the number of deaths, and performs comparably with the CSCS and MCD in predicting the number of people who will take the PCR tests and the number of patients in hospital due to Covid-19.
Additionally, it is seen that the CSCS and MCD methods perform better than the Glasso and SCIO on this data set.
One possible reason is that the data, although not very strictly, have a natural time ordering by weeks among the variables.
The CSCS and MCD methods utilize this information and show the advantages over the Glasso and SCIO.

\begin{figure}[h]
\centering
\scalebox{0.28}[0.28]{\includegraphics{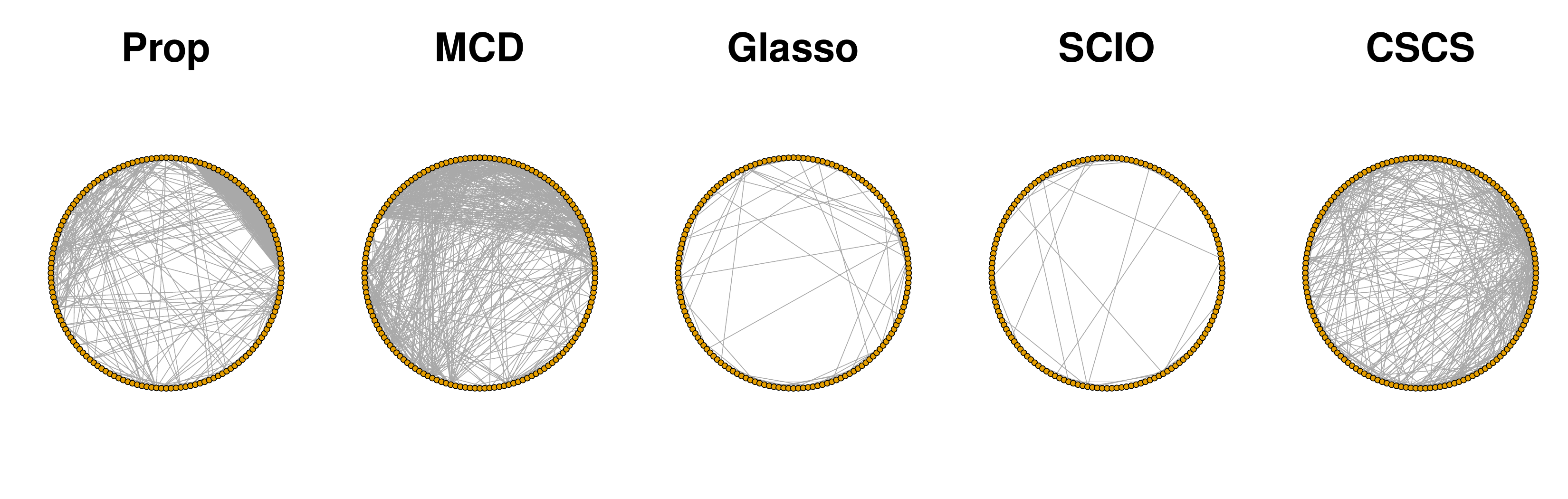}}
\caption{Conditional dependence networks inferred from Covid-19 data for all variables.}\label{Fig:networks}
\end{figure}

\begin{figure}[h]
\centering
\scalebox{0.28}[0.28]{\includegraphics{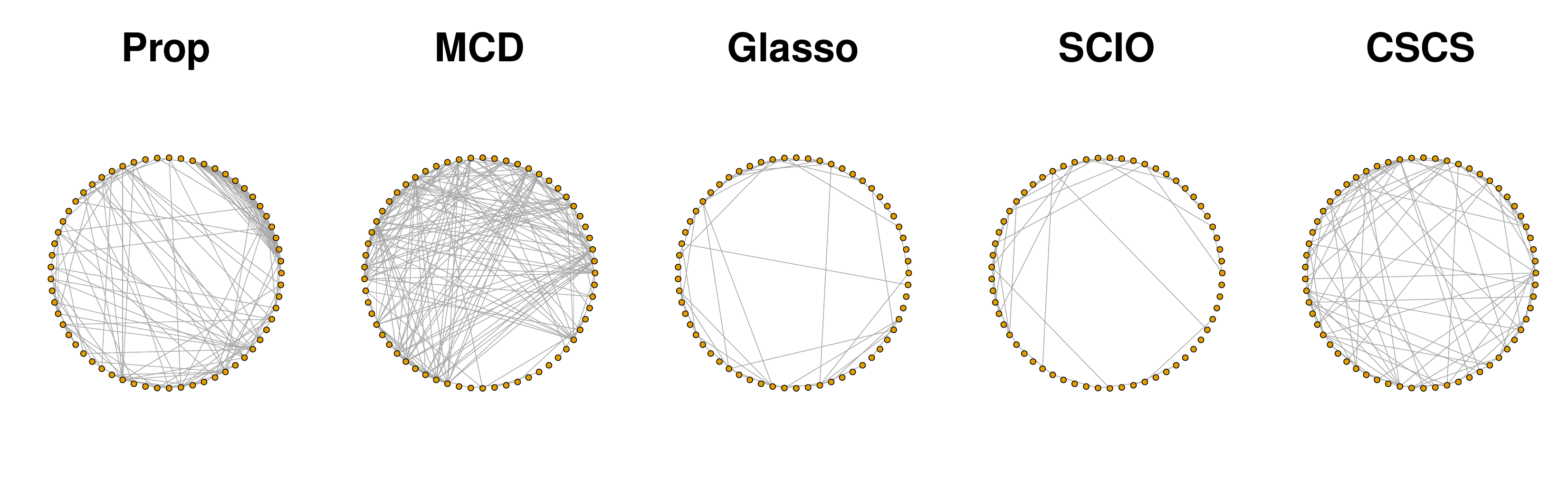}}
\caption{Conditional dependence networks inferred from Covid-19 data for variables from 5th week to 20th week.}\label{Fig:fewnetworks}
\end{figure}

In addition, Figure \ref{Fig:networks} displays the estimated conditional dependency relationship between variables through network plots, which provide a further insight into the analysis results from the compared methods.
It is seen that the CSCS and MCD methods yield an excessive amount of connections, which complicates the estimated model with difficult interpretation.
While the graphs constructed from the Glasso and SCIO appear to give few connections, not providing sufficient information to infer the conditional dependency among variables.
In contrast, the proposed model identifies a proper number of variable connections with certain sparsity.
For a clear presentation, Figure \ref{Fig:fewnetworks} shows the estimated networks for the 60 consecutive variables, corresponding to the 5th week to 20th week.
One would expect a meaningful network with nearby variables having some connections while the far-away variables having few connections.
From Figure \ref{Fig:fewnetworks}, it appears that the proposed model outperforms other methods, since the MCD method presents relatively too many connections of far-away variables, and the Glasso and SCIO provide too little information of dependency relationship between nearby variables.
The CSCS displays a proper number of connections as the Prop method, while the Prop method seems to infer more connections of nearby variables, which is indicated by dense connections on the both upper right and bottom right of network plot constructed by the Prop method.

\subsection{Call Center Data}
In this section, we apply the proposed model to analyze the call center data from \cite{huang2006covariance}.
The data set was collected from one of call centers in a major U.S. northeastern financial organization.
It recorded the time that every call arrives at the service queue from 7:00am until midnight in each day of 2002, except for 6 days when the data collecting
equipment was out of order.
The 17-hour period is divided into 102 10-minute intervals, and the number of calls during each time interval is counted.
Since the arrival patterns of calls are different between weekdays and weekends, the analysis focuses on weekdays here.
In addition, screening out some outliers that include holidays and days when the recording equipment was faulty, we are left with 239 observations.

Denote such data by $\bm N_i = (N_{i1},\ldots,N_{i102})', i = 1, . . . , 239$, where $N_{i j}$ is the number of calls received by the call center for the
$j$th 10-minute interval on day $i$.
Do the transformation $y_{ij} = \sqrt{(N_{ij} + 1/4)}, i = 1,\ldots, 239, j = 1,\ldots, 102$ to make the data distribution close to normal \citep{brown2005statistical}.
We apply the compared methods to estimate the $102 \times 102$ inverse covariance matrix.
The 102 variables are partitioned into three groups with each group subsequently containing 30, 30 and 42 variables, corresponding to three time intervals of 7:00am - 12:00am, 12:00am - 5:00pm and 5:00pm - midnight.
To compare the performance of different methods, we predict the number of arriving calls later in a day based on the arrival patterns at earlier times of that day.
As in Section \ref{sec:covid}, we use notation $\bm y_i = (y_{i1},\ldots,y_{i102})' = (\bm y_{iE}', \bm y_{iL}')'$, where $\bm y_{iE}$ and $\bm y_{iL}$ measure the arrival patterns in the early and later times of day $i$.
To examine the performance of the proposed model,
the 239 observations are split into a training set which contains the first 205 data corresponding to dates from January to October, and a testing set with the rest 34 data corresponding to dates from November and December.
The values of $\bm y_{iE}, i = 206, \ldots, 239$ in the testing data are used to predict $\bm y_{iL}$ based on Equation \eqref{eq:pred}.
Two different settings of $\bm y_{iE}$ are considered as (A): $\bm y_{iE} = (y_{i1}, \ldots, y_{i51})'$ and (B): $\bm y_{iE} = (y_{i1}, \ldots, y_{i60})'$.
Setting A is used in \cite{huang2006covariance} and represents using the data from the early half of a day to predict the call numbers in the later half of the day.
Setting B means using the data from the daytime to predict the call numbers in the evenings.
For each time interval in $\bm y_{iL}$, the APE is computed as
\begin{align*}
\mbox{APE}_j = \frac{1}{34} \sum_{i=206}^{239}|\hat{y}_{ij} - y_{ij}|,
\end{align*}
where $\hat{y}_{ij}$ is the predicted value, $j = 52, \ldots, 102$ for setting A and $j = 61, \ldots, 102$ for setting B.

\begin{table}[h]
\begin{center}
\caption{The averages and standard errors of APE for call center data.} \label{table:callcenter}
\begin{tabular}{rrrrrrrrrrrrrrrrr}
\hline
Setting &MCD     &Glasso    &SCIO      &CSCS  &Prop   \\
\hline
A       &1.240 (0.052) &1.258 (0.053) &1.194 (0.063) &1.152 (0.045) &0.991 (0.031) \\
B       &1.159 (0.045) &1.163 (0.044) &1.083 (0.030) &1.098 (0.039) &1.023 (0.030) \\
\hline
\end{tabular}
\end{center}
\end{table}

Table \ref{table:callcenter} shows the averages and corresponding standard errors of APE over $j$ from each compared method for Settings A and B.
The proposed model generally gives superior performance over other methods with lowest values of APE.
The SCIO appears to be comparable with the CSCS method, and they perform better than MCD and Glasso.
The application results demonstrate the merits of the proposed model when data have
partial information of the variable ordination.



\section{Discussion} \label{sec:con}
In this work, we propose a block Cholesky decomposition (BCD) method for inverse covariance estimation when the partial information of variable ordination is available.
The proposed model adopts the MCD to induce the sparsity by imposing the regularization on the multivariate regressions.
The proposed method provides a unified framework for several existing methods.
including the MCD, the Glasso, and the estimators of \cite{Rothman2010A} and \cite{Witten2011new}.
The theoretical results indicate that the proposed model enjoys a faster consistent rate than the Glasso if the data have the partial ordination among the variables.

There are several directions for future research.
First, the objective function \eqref{eq:obj} is not jointly convex, hence the theoretical global minimizer is not guaranteed.
A potential way to address this issue is to employ the classical Cholesky decomposition for the inverse covariance estimation \citep{yu2017learning, khare2019scalable}, which leads to a convex optimization.
However, the statistical interpretation of the classical Cholesky decomposition may not be as explicit as the MCD, where the Cholesky factor matrix can be constructed from regressions.
It will be interesting to further investigate how to incorporate the partial information of variable ordering into the classical Cholesky decomposition.
Second, the proposed method can be extended to investigate the multivariate response regression, where the multivariate response variables have certain ordering information, such as data with outcomes collected in a time sequence order.
Third, the error bounds in Theorem 1 and Theorem 2 could be further improved.
Note that $(\bm D_{j}^{-1})_\ast$ and $(\bm A_{j})_\ast$ in Theorem 1 are generated on the path of Algorithm 1.
When they are close to the underlying true parameters, several terms of $o_p(1)$ in the proof could be close to 0,
making a smaller constant term in the error bound.

\bibliographystyle{ECA_jasa}
\bibliography{Ref_BCD}

\section*{Appendix}
In this section, we provide all the technical proofs for the main results of the paper.
Before proving theorems, we present several lemmas.

\begin{lemma}\label{lemma3}
Denote a squared block diagonal matrix by $\bm D = \diag(\bm D_1, \bm D_2, \ldots,$ $\bm D_M)$. Suppose $\bm D_i$ have eigenvalues $\mathcal{C}_{\lambda_i} = \{ \lambda_{i p_1}, \lambda_{i p_2}, \ldots, \lambda_{i p_i} \}, i = 1, 2, \ldots, M$, then the eigenvalues of matrix $\bm D$ are $\mathcal{C}_{\lambda_1}, \mathcal{C}_{\lambda_2}, \ldots, \mathcal{C}_{\lambda_M}$.
\end{lemma}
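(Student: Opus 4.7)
The plan is to compute the characteristic polynomial of $\bm D$ and invoke the standard determinant identity for block-diagonal matrices. First I would observe that $\bm D - \lambda \bm I$ is itself block diagonal with $j$th diagonal block $\bm D_j - \lambda \bm I_{p_j}$, so
\begin{align*}
\det(\bm D - \lambda \bm I) \;=\; \prod_{j=1}^{M} \det(\bm D_j - \lambda \bm I_{p_j}).
\end{align*}
The eigenvalues of $\bm D$ are exactly the roots (counted with algebraic multiplicity) of the left-hand side, and the factorization shows that this root set equals the union of the roots of the individual factors $\det(\bm D_j - \lambda \bm I_{p_j})$. Since by hypothesis the roots of the $j$th factor are precisely $\mathcal{C}_{\lambda_j}$, we conclude that the multiset of eigenvalues of $\bm D$ is $\mathcal{C}_{\lambda_1} \cup \mathcal{C}_{\lambda_2} \cup \cdots \cup \mathcal{C}_{\lambda_M}$, as claimed.

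As a sanity check on multiplicities (and to provide a more constructive view), I would also verify the conclusion via eigenvectors: if $\bm v \in \mathbb{R}^{p_j}$ satisfies $\bm D_j \bm v = \lambda \bm v$, then the vector $\bm w \in \mathbb{R}^{p}$ obtained by placing $\bm v$ in the $j$th block and zeros in the other $M-1$ blocks satisfies $\bm D \bm w = \lambda \bm w$. Running this construction over a basis of eigenvectors of each $\bm D_j$ yields $\sum_{j=1}^M p_j = p$ linearly independent eigenvectors of $\bm D$ with the claimed eigenvalues, which matches the count from the characteristic polynomial argument.

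There is no genuine obstacle in this lemma; it is a purely mechanical consequence of the block-diagonal structure. The only care needed is to treat eigenvalues as a multiset (so that repeated eigenvalues inside any $\mathcal{C}_{\lambda_j}$ and coincidences across different $j$'s are all counted with the correct multiplicity). The lemma is recorded here as a convenient tool for bounding the spectrum of $\bm D_0$ and $\bm D_0^{-1}$ in terms of the spectra of the individual diagonal blocks $\bm D_{j0}$, which will be used when translating the assumption \eqref{assumption1} on $\bm \Omega_0$ into block-wise bounds in the proofs of Theorems~\ref{theorem1} and~\ref{theorem2}.
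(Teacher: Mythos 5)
Your proof is correct, but your primary argument takes a different route from the paper's. The paper proceeds constructively: it writes each block as $\bm D_i = \bm P_i \bm \Lambda_i \bm P_i^{-1}$, assembles the block-diagonal matrices $\bm P = \diag(\bm P_1,\ldots,\bm P_M)$ and $\bm \Lambda = \diag(\bm \Lambda_1,\ldots,\bm \Lambda_M)$, and verifies $\bm D \bm P = \bm P \bm \Lambda$, so that $\bm D = \bm P \bm \Lambda \bm P^{-1}$ exhibits the eigenvalues directly. Your characteristic-polynomial argument, $\det(\bm D - \lambda \bm I) = \prod_{j} \det(\bm D_j - \lambda \bm I_{p_j})$, reaches the same conclusion without ever invoking an eigendecomposition of the blocks, so it is valid for arbitrary (possibly non-diagonalizable) blocks and handles algebraic multiplicities automatically; the paper's argument tacitly assumes each $\bm D_i$ admits a full eigendecomposition, which is harmless in context because the $\bm D_{j}$ arising in the paper are symmetric positive definite covariance blocks, and in return it delivers an explicit eigenvector matrix for $\bm D$. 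Your secondary eigenvector-embedding check is essentially the paper's construction viewed columnwise, so the two write-ups agree in spirit there; either version suffices for the way the lemma is used (transferring spectral bounds between $\bm D$ and its blocks in Lemma 2).
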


\begin{proof}
Let $\bm D_i = \bm P_i \bm \Lambda_i \bm P_i^{-1}$ be the eigenvalue decomposition, where $\bm \Lambda_i = \diag(\lambda_{i p_1}, \lambda_{i p_2}, \ldots, \lambda_{i p_i})$, and $\bm P_i$ is composed of the corresponding eigenvectors.
Define $\bm \Lambda = \diag(\bm \Lambda_{1}, \bm \Lambda_{2}, \ldots, \bm \Lambda_{M})$ and $\bm P = \diag(\bm P_{1}, \bm P_{2}, \ldots, \bm P_{M})$.
Then we have
\begin{align*}
\bm D \bm P &= \diag(\bm D_1, \bm D_2, \ldots, \bm D_M) \diag(\bm P_{1}, \bm P_{2}, \ldots, \bm P_{M}) \\
&= \diag(\bm D_1 \bm P_{1} , \bm D_2 \bm P_{2} , \ldots, \bm D_M \bm P_{M}) \\
&= \diag(\bm P_{1} \bm \Lambda_{1}, \bm P_{2} \bm \Lambda_{2}, \ldots, \bm P_{M} \bm \Lambda_{M}) \\
&= \diag(\bm P_{1}, \bm P_{2}, \ldots, \bm P_{M}) \diag(\bm \Lambda_{1}, \bm \Lambda_{2}, \ldots, \bm \Lambda_{M}) \\
&= \bm P \bm \Lambda,
\end{align*}
which indicates $\bm D = \bm P \bm \Lambda \bm P^{-1}$, and establishes the lemma.
\end{proof}

Lemma \ref{lemma3} describes a property of eigenvalues for the block diagonal matrix.
The following Lemma \ref{lemma1} is from Theorem A.10 in \cite{Bai2010Spectral}.
It demonstrates the property of matrix singular values. Its result is stated here for completeness.

\begin{lemma}\label{lemma1}
Let $\bm B$ and $\bm C$ be two matrices of order $m_1 \times m_2$ and $m_2 \times m_3$. For any $i, j \geq 0$, we have
\begin{align*}
\varphi_{i+j+1} (\bm B \bm C) \leq \varphi_{i+1} (\bm B)  \varphi_{j+1} (\bm C).
\end{align*}
\end{lemma}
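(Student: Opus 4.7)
The plan is to invoke the Eckart--Young characterization of singular values: for any matrix $\bm A$ and any integer $k \geq 1$,
\[
\varphi_{k}(\bm A) \;=\; \min \{ \|\bm A - \bm R\|_{2} : \mathrm{rank}(\bm R) \leq k - 1 \},
\]
where $\|\cdot\|_{2} = \varphi_{1}(\cdot)$ denotes the spectral norm. The strategy is then to exhibit a concrete matrix $\bm R$ of rank at most $i + j$ for which $\|\bm B \bm C - \bm R\|_{2} \leq \varphi_{i+1}(\bm B)\,\varphi_{j+1}(\bm C)$; feeding this $\bm R$ into the Eckart--Young formula for $\bm B \bm C$ with $k = i+j+1$ yields the claimed inequality immediately.

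To construct such an $\bm R$, I would let $\bm B_{i}$ be a best rank-$i$ approximation of $\bm B$ and $\bm C_{j}$ a best rank-$j$ approximation of $\bm C$, so that Eckart--Young applied to $\bm B$ and $\bm C$ separately gives
\[
\|\bm B - \bm B_{i}\|_{2} = \varphi_{i+1}(\bm B), \qquad \|\bm C - \bm C_{j}\|_{2} = \varphi_{j+1}(\bm C).
\]
The algebraic identity $\bm B \bm C = \bm B \bm C_{j} + \bm B_{i}(\bm C - \bm C_{j}) + (\bm B - \bm B_{i})(\bm C - \bm C_{j})$ then suggests taking $\bm R = \bm B \bm C_{j} + \bm B_{i}(\bm C - \bm C_{j})$, splitting off the ``doubly large'' residual term $(\bm B - \bm B_{i})(\bm C - \bm C_{j})$.

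Two short observations close the argument. First, $\mathrm{rank}(\bm R) \leq \mathrm{rank}(\bm B \bm C_{j}) + \mathrm{rank}(\bm B_{i}(\bm C - \bm C_{j})) \leq j + i$, since a product has rank no larger than either factor. Second, the residual obeys
\[
\|\bm B \bm C - \bm R\|_{2} = \|(\bm B - \bm B_{i})(\bm C - \bm C_{j})\|_{2} \leq \|\bm B - \bm B_{i}\|_{2}\,\|\bm C - \bm C_{j}\|_{2} = \varphi_{i+1}(\bm B)\,\varphi_{j+1}(\bm C)
\]
by submultiplicativity of the spectral norm. Applying Eckart--Young one last time to $\bm B \bm C$ with this $\bm R$ completes the proof.

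I do not anticipate a real obstacle. The only boundary point worth a sentence is the case $i + j + 1 > \min(m_{1}, m_{3})$, in which $\varphi_{i+j+1}(\bm B \bm C) = 0$ by convention and the inequality is automatic; otherwise the argument is a direct consequence of Eckart--Young together with the two elementary facts that $\mathrm{rank}$ is subadditive and the spectral norm is submultiplicative.
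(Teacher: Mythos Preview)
Your argument is correct. The Eckart--Young low-rank characterization of singular values, together with the specific rank-$(i+j)$ competitor $\bm R = \bm B \bm C_{j} + \bm B_{i}(\bm C - \bm C_{j})$ you exhibit, gives exactly the desired bound; the rank subadditivity and spectral-norm submultiplicativity steps are both standard and used correctly, and your remark on the degenerate case $i+j+1 > \min(m_{1},m_{3})$ covers the only edge.

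As for comparison with the paper: the paper does not supply a proof at all. It simply cites the inequality as Theorem~A.10 in Bai and Silverstein's \emph{Spectral Analysis of Large Dimensional Random Matrices} and states the result for completeness. Your self-contained argument therefore goes further than the paper, and the Eckart--Young route you take is in fact one of the standard ways this inequality is proved in the literature (an alternative route, closer to what one finds in Horn and Johnson, uses the Courant--Fischer min--max characterization of singular values directly). Either approach is short; yours has the advantage of making the rank structure of the approximant completely explicit.
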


Based on the results of Lemmas \ref{lemma3} and \ref{lemma1}, we present the following Lemma \ref{lemma2}, which provides a relationship between matrix $\bm \Omega$ and its block Cholesky factor matrices ($\bm T^{-1}, \bm D^{-1}$) in terms of their singular values.

\begin{lemma}\label{lemma2}
Let $\bm \Omega = \bm T' \bm D^{-1} \bm T$ be the block MCD of the inverse covariance matrix.
If the condition \eqref{assumption1} is satisfied, that is, there exists a constant $\theta > 0$ such that
$1/\theta < \varphi_{p}(\bm \Omega) \leq \varphi_{1}(\bm \Omega) < \theta$,
then there exist constants $h_{1}$ and $h_{2}$ such that
\begin{align*}
0 < h_{1} < \varphi_{p}(\bm T^{-1}) \leq \varphi_{1}(\bm T^{-1}) < h_{2}< \infty,
\end{align*}
and
\begin{align*}
0 < h_{1} < \varphi_{p}(\bm D^{-1}) \leq \varphi_{1}(\bm D^{-1}) < h_{2}< \infty.
\end{align*}
\end{lemma}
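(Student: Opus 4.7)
The plan is to bound the singular values of $\bm D^{-1}$ and $\bm T^{-1}$ separately: the first by a Schur-complement / Cauchy interlacing argument aggregated through Lemma \ref{lemma3}, and the second by transferring the resulting bound on $\bm D$ to $\bm T$ via the factorization $\bm D = \bm T \bm \Sigma \bm T'$ together with Lemma \ref{lemma1}. Throughout I use that $\bm \Sigma = \bm \Omega^{-1}$ inherits the hypothesis of the lemma after inversion, so $1/\theta < \varphi_{p}(\bm \Sigma) \leq \varphi_{1}(\bm \Sigma) < \theta$.

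For $\bm D$, I would first show every block $\bm D_j$ has eigenvalues trapped in $(1/\theta, \theta)$. By construction $\bm D_j$ is the Schur complement of the leading block in the principal submatrix $\bm \Sigma^{[j]}$ of $\bm \Sigma$ spanned by the first $j$ variable groups, so it equals $\mbox{Cov}(\bm X^{(j)} \mid \bm X^{(1)}, \ldots, \bm X^{(j-1)})$. Applying the block MCD to $\bm \Sigma^{[j]}$ itself and reading off its last diagonal block (noting that the last diagonal block of the associated $\bm T$-factor is the identity) gives the identification $\bm D_j^{-1} = (\bm \Omega^{[j]})_{jj}$, where $\bm \Omega^{[j]} := (\bm \Sigma^{[j]})^{-1}$. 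Two applications of Cauchy interlacing, first to relate eigenvalues of $\bm \Omega^{[j]}$ to those of its principal submatrix $\bm D_j^{-1}$ and then to relate eigenvalues of $\bm \Sigma^{[j]}$ to those of $\bm \Sigma$ (inverting in between), yield
\begin{equation*}
\varphi_{p}(\bm \Sigma) \leq \varphi_{p_j}(\bm D_j) \leq \varphi_{1}(\bm D_j) \leq \varphi_{1}(\bm \Sigma),
\end{equation*}
uniformly in $j$. Lemma \ref{lemma3} then aggregates this into $1/\theta < \varphi_{p}(\bm D) \leq \varphi_{1}(\bm D) < \theta$, and reciprocation gives the required bounds for $\bm D^{-1}$.

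For $\bm T^{-1}$, I use $\bm D = \bm T \bm \Sigma \bm T' = (\bm T \bm \Sigma^{1/2})(\bm T \bm \Sigma^{1/2})'$, derived immediately before \eqref{decomposition}. Hence the singular values of $\bm T \bm \Sigma^{1/2}$ are the square roots of the eigenvalues of $\bm D$ and lie in $(1/\sqrt{\theta}, \sqrt{\theta})$. Writing $\bm T = (\bm T \bm \Sigma^{1/2})\bm \Sigma^{-1/2}$ and $\bm T^{-1} = \bm \Sigma^{1/2}(\bm T \bm \Sigma^{1/2})^{-1}$ and applying Lemma \ref{lemma1} with $i=j=0$ to each product bounds both $\varphi_{1}(\bm T)$ and $\varphi_{1}(\bm T^{-1})$ above by $\sqrt{\theta}\cdot\sqrt{\theta} = \theta$. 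Since $\varphi_{p}(\bm T^{-1}) = 1/\varphi_{1}(\bm T) > 1/\theta$, taking $h_1 = 1/\theta$ and $h_2 = \theta$ (or any slight slackening for strict inequalities) finishes the proof.

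The main obstacle is the structural identification $\bm D_j^{-1} = (\bm \Omega^{[j]})_{jj}$; it hinges on applying the block MCD to the truncated matrix $\bm \Sigma^{[j]}$ and observing that the $j$th conditional variance is unchanged by dropping the later groups, precisely because the regression in \eqref{inverse covariance:eq1} uses only the preceding groups. Once this is in place, the Cauchy-interlacing and Lemma \ref{lemma1} steps are mechanical.
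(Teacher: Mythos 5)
Your proof is correct, but it follows a genuinely different route from the paper's. The paper bounds $\bm D^{-1}$ by writing $\bm \Omega = \bm R'\bm R$ with $\bm R = \bm D^{-1/2}\bm T$, observing that $\bm \Omega_{ii} - \bm D_i^{-1}$ is positive semidefinite (which gives only the upper bound $\varphi_1(\bm D_i^{-1}) \le \theta$), and then obtaining the lower bound on $\varphi_p(\bm D^{-1})$ from the determinant identity $\prod_i \varphi_i(\bm \Omega) = \prod_i \varphi_i(\bm D^{-1})$; the bounds on $\bm T$ and $\bm T^{-1}$ then come from applying Lemma \ref{lemma1} to $\bm \Omega = \bm T'\bm D^{-1}\bm T$ and to $\bm D^{-1} = \bm T'^{-1}\bm \Omega\bm T^{-1}$. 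You instead identify $\bm D_j$ as the Schur complement (conditional covariance) in the leading principal submatrix $\bm \Sigma^{[j]}$, use the block-inverse identity $\bm D_j^{-1} = (\bm \Omega^{[j]})_{jj}$, and apply Cauchy interlacing twice, then transfer to $\bm T$ via $\bm D = (\bm T\bm\Sigma^{1/2})(\bm T\bm\Sigma^{1/2})'$. All of these steps check out (the identification of $\bm D_j^{-1}$ with the last diagonal block of $(\bm\Sigma^{[j]})^{-1}$ is exactly right because the $j$th regression only involves preceding groups). What your approach buys is significant: your constants are $h_1 = 1/\theta$ and $h_2 = \theta$, independent of $p$, whereas the paper's proof yields $h_1 = \min(\theta^{1-2p}, \theta^{-p})$, which decays exponentially in $p$ and is therefore not truly a constant in the regime the paper explicitly allows ($p \to \infty$). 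Since downstream arguments (the $1/h^4$ terms in the proof of Theorem \ref{theorem1} and the claims $\|\bm T_0\| = O(1)$, $\|\bm D_0^{-1}\| = O(1)$ in the proof of Theorem \ref{theorem2}) require genuinely dimension-free bounds, your sharper version is not merely cosmetic — it is the version those later proofs actually need.
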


\begin{proof}
By the decomposition \eqref{inverse covariance:eq1}, we partition $\bm \Omega$ into blocks according to the variable groups $\bm X^{(1)}, \bm X^{(2)}, \ldots, \bm X^{(M)}$ such that its diagonal blocks are $\bm \Omega_{ii}$ of order $p_i \times p_i, i = 1, 2, \ldots, M$, and $\sum_{i = 1}^M p_i = p$.
Write $\bm \Omega = \bm T' \bm D^{-1} \bm T = \bm T' \bm D^{-\frac{1}{2}} \bm D^{-\frac{1}{2}} \bm T = \bm R' \bm R$, where
\begin{align*}
\bm R = \bm D^{-\frac{1}{2}} \bm T = \left(
\begin{array}{ccccc}
\bm R_{11} & \bm 0 & \ldots & \bm 0 \\
\bm R_{21} &  \bm R_{22} & \ldots & \bm 0 \\
\vdots & \vdots & \ddots &  \vdots \\
\bm R_{M1} & \bm R_{M2} & \ldots & \bm R_{MM}
\end{array}
\right )
\end{align*}
with $\bm R_{ii} = \bm D_{i}^{-\frac{1}{2}}$.
Note that $\bm R_{ii}$ is a symmetric matrix due to the symmetry of $\bm D_{i}$.
In addition, it is obvious to have $\bm \Omega_{ii} = \sum_{i \geq k} \bm R'_{ik} \bm R_{ik}$, implying that $\bm \Omega_{ii} - \bm R'_{ii} \bm R_{ii} = \bm \Omega_{ii} - \bm D_i^{-1}$ is semi-positive definite.
Consequently we have
\begin{align}\label{proof;eq1}
\varphi_p(\bm D_i^{-1}) \leq \varphi_1(\bm D_i^{-1}) \leq \varphi_1(\bm \Omega_{ii}) \leq \theta.
\end{align}
Taking determinant on both sides of $\bm \Omega = \bm T' \bm D^{-1} \bm T$ yields
$$\varphi_p(\bm \Omega) \cdots \varphi_1(\bm \Omega) = \varphi_p(\bm D^{-1}) \cdots \varphi_1(\bm D^{-1}).$$
By $\varphi_1(\bm D_i^{-1}) \leq \theta$ for each $i = 1, 2, \ldots, M$ via \eqref{proof;eq1}, together with Lemma \ref{lemma3},
it is easy to see $\varphi_1(\bm D^{-1}) \leq \theta$. We hence have
\begin{align*}
(\frac{1}{\theta})^p \leq \varphi_p^{p}(\bm \Omega) \leq \prod_{i=1}^p \varphi_i(\bm \Omega) = \prod_{i=1}^p \varphi_i(\bm D^{-1}) \leq \theta^{p-1} \varphi_p(\bm D^{-1}),
\end{align*}
which gives $\varphi_p(\bm D^{-1}) \geq (\frac{1}{\theta})^{2p-1}$.
As a result,
\begin{align*}
0 < (\frac{1}{\theta})^{2p-1} \leq \varphi_p(\bm D^{-1}) \leq \varphi_1(\bm D^{-1}) \leq \theta < \infty.
\end{align*}
To bound singular values of matrix $\bm T^{-1}$, on one hand, we use Lemma \ref{lemma1} to obtain $\varphi_p(\bm \Omega) = \varphi_p(\bm T' \bm D^{-1} \bm T) = \varphi_p( \bm T \bm T' \bm D^{-1}) \leq \varphi_p( \bm T \bm T') \varphi_1(\bm D^{-1}) = \varphi_p(\bm T') \varphi_p(\bm T) \varphi_1(\bm D^{-1})$, indicating $$\varphi_p(\bm T) \geq \sqrt{\varphi_p(\bm \Omega) / \varphi_1(\bm D^{-1})} \geq \sqrt{1 / \theta^2} = \frac{1}{\theta}.$$
On the other hand, applying Lemma \ref{lemma1} again for $\bm D^{-1} = \bm T'^{-1} \bm \Omega \bm T^{-1}$ yields $\varphi_p (\bm D^{-1}) \leq  \varphi_p^2 (\bm T^{-1}) \varphi_1 (\bm \Omega) = \varphi_1 (\bm \Omega) / \varphi_1^2 (\bm T)$, implying
\begin{align*}
\varphi_1 (\bm T) \leq \sqrt{\frac{\varphi_1 (\bm \Omega)}{\varphi_p (\bm D^{-1})}} \leq \sqrt{\frac{\theta}{1 / \theta^{2p-1}}} = \theta^p.
\end{align*}
As a result,
\begin{align*}
0 < \frac{1}{\theta} \leq \varphi_p(\bm T) \leq \varphi_1(\bm T) \leq \theta^p < \infty \\
0 < (\frac{1}{\theta})^p \leq \varphi_p(\bm T^{-1}) \leq \varphi_1(\bm T^{-1}) \leq \theta < \infty.
\end{align*}
Taking $h_1 = \min(\theta^{1-2p}, \theta^{-p})$ and $h_2 = \theta$ establishes the lemma.
\end{proof}

It is seen from Lemma \ref{lemma2} that the singular values of the matrices $\bm T^{-1}$ and $\bm D^{-1}$ are bounded if the singular values of the inverse covariance matrix $\bm \Omega$ are bounded.
Now we give the proofs of Theorems.

\begin{proof}{\textbf{Proof of Theorem \ref{theorem1}}.}{}

From the negative log-likelihood \eqref{inverse covariance:eq5}, we have
\begin{align*}
L(\bm T,\bm D) &= -\sum_{j=1}^{M} \log |\bm D_{j}^{-1}| +  \sum_{j=1}^{M} \tr \left[ \bm S_{\epsilon_j} \bm D_{j}^{-1} \right ]  \nonumber \\
&= \sum_{j=1}^{M} \log |\bm D_{j}| +  \tr \left(
\begin{array}{ccccc}
\bm S_{\epsilon_1} \bm D_1^{-1} & \bm 0 & \ldots & \bm 0 \\
\bm 0 &  \bm S_{\epsilon_2} \bm D_2^{-1} & \ldots & \bm 0 \\
\vdots & \vdots & \ddots &  \vdots \\
\bm 0 & \bm 0 & \ldots & \bm S_{\epsilon_M} \bm D_M^{-1}
\end{array}
\right )   \nonumber \\
&= \log |\bm D| +  \tr \left(
\begin{array}{ccccc}
\bm S_{\epsilon_1} & \bm 0 & \ldots & \bm 0 \\
\bm 0 &  \bm S_{\epsilon_2} & \ldots & \bm 0 \\
\vdots & \vdots & \ddots &  \vdots \\
\bm 0 & \bm 0 & \ldots & \bm S_{\epsilon_M}
\end{array}
\right ) \bm D^{-1}.
\end{align*}
By the notation $\bm S_{\epsilon_j} = \frac{1}{n}(\mathbb{X}^{(j)} - \mathbb{Z}^{(j)} \bm A'_{j})'(\mathbb{X}^{(j)} - \mathbb{Z}^{(j)} \bm A'_{j})$, it is easy to see
\begin{align*}
L(\bm T,\bm D) &= \log |\bm D| +  \frac{1}{n} \tr \left(
\begin{array}{ccccc}
(\mathbb{X}^{(1)})' \\
(\mathbb{X}^{(2)} - \mathbb{Z}^{(2)} \bm A'_{2} )'  \\
\vdots \\
(\mathbb{X}^{(M)} - \mathbb{Z}^{(M)} \bm A'_{M} )'
\end{array}
\right )  \\
&~~~~~~~~~~~~~~~~~\left( \mathbb{X}^{(1)},  \mathbb{X}^{(2)} - \mathbb{Z}^{(2)} \bm A'_{2}, \ldots, \mathbb{X}^{(M)} - \mathbb{Z}^{(M)} \bm A'_{M}   \right ) \bm D^{-1}    \\
&= \log |\bm D| + \frac{1}{n} \tr \left[ \bm T \mathbb{X}' \mathbb{X} \bm T' \bm D^{-1} \right ]   \\
&= \log |\bm D| + \tr \left[ \bm T' \bm D^{-1} \bm T \bm S \right ],
\end{align*}
where $\bm S = \frac{1}{n} \mathbb{X}' \mathbb{X}$.
Consequently, $L_{\lambda}(\bm T,\bm D)$ can be written as
\begin{align*}
L_{\lambda}(\bm T,\bm D) &= \log |\bm D| + \tr \left[ \bm T' \bm D^{-1} \bm T \bm S \right ] + \lambda_{1} \| \bm A \|_{1} +  \lambda_{2}  \| \bm D^{-1} \|_{1}^{-}  \\
&= \log |\bm D| + \tr \left[ \bm T' \bm D^{-1} \bm T \bm S \right ] + \lambda_{1} \| \bm T \|_{1} +  \lambda_{2}  \| \bm D^{-1} \|_{1}^{-}  \\
&= \log |\bm D| + \tr \left[ \bm T' \bm D^{-1} \bm T \bm S \right ] + \lambda_{1} \sum_{i > k} |t_{ik}| +  \lambda_{2} \sum_{i \neq k} |\psi_{ik}|,
\end{align*}
where $t_{ik}$ and $\psi_{ik}$ are the $(i, k)$th elements of matrices $\bm T$ and $\bm D^{-1}$, respectively.

For part (a), we define $G_1(\Delta_{T}) = L_{\lambda}(\bm T_{0} + \Delta_{T} | \bm D_\ast) - L_{\lambda}(\bm T_{0} | \bm D_\ast)$.
Let $\mathcal{A}_{U_{j}} = \{ \Delta_{T_j}: \Delta_{T_j} = \Delta_{T_j}', \| \Delta_{T_j} \|_{F}^{2} \leq U_{j}^{2} s_{T_j} \log(\sum_{k=1}^j p_k) / n \}$ for $j = 1, 2, \ldots, M$, where $U_{j}$ are positive constants.
We will show that for $\Delta_{T_j} \in \partial \mathcal{A}_{U_{j}}$, probability $\Pr(G_1(\Delta_{T})) > 0$ is tending to 1 as $n \rightarrow \infty$ for sufficiently large $U_{j}$, where $\partial \mathcal{A}_{U_{j}}$ are the boundaries of $\mathcal{A}_{U_{j}}$.
Additionally, since $G_1(\Delta_{T}) = 0$ when $\Delta_{T_j} = 0$, the minimum point of $G_1(\Delta_{T})$ is achieved when
$\Delta_{T_j} \in \mathcal{A}_{U_{j}}$.
That is $\| \Delta_{T_j} \|_{F}^{2} = O_{p} (s_{T_j} \log(\sum_{k=1}^j p_k) / n)$.

Assume $\| \Delta_{T_j} \|_{F}^{2} = U_{j}^{2} s_{T_j} \log(\sum_{k=1}^j p_k) / n$.
Write $\bm T = \bm T_{0} + \Delta_{T}$,
then we decompose $G_1(\Delta_{T})$ as
\begin{align*}
G_1(\Delta_{T}) &= L_{\lambda}(\bm T_{0} + \Delta_{T} | \bm D_\ast) - L_{\lambda}(\bm T_{0} | \bm D_\ast) \\
&= \tr \left[ \bm T' \bm D_\ast^{-1} \bm T \bm S \right ] - \tr \left[ \bm T_0' \bm D_\ast^{-1} \bm T_0 \bm S \right ] + \lambda_{1} \sum |t_{ik}| - \lambda_{1} \sum |t_{0ik}|  \\
&= M_{1} + M_{2} + M_{3},
\end{align*}
where
\begin{align*}
M_{1} &= \tr [\bm D_\ast^{-1} (\bm T (\bm S - \bm \Sigma_{0}) \bm T' - \bm T_{0} (\bm S - \bm \Sigma_{0}) \bm T'_{0})], \\
M_{2} &= \tr [\bm D_\ast^{-1} (\bm T \bm \Sigma_{0} \bm T' - \bm T_{0} \bm \Sigma_{0} \bm T'_{0})], \\
M_{3} &= \lambda_{1} \sum |t_{ik}| -  \lambda_{1} \sum |t_{0ik}|.
\end{align*}
The above decomposition of $G_1(\Delta_{T})$ into $M_{1}$ to $M_{3}$ is very similar to that in the proof of Lemma 3 of \cite{kang2021on}; hence it is omitted here.
Now we bound each component respectively.
Note that $\|\Delta_{T}\|_{F}^{2} = \| \bm T - \bm T_0 \|_{F}^{2} = \sum_{j=1}^M \|\Delta_{T_j} \|_{F}^{2}$.
Therefore, based on the proof of Theorem 3.1 in \cite{Jiang2012Joint}, for any $\epsilon > 0$, there exists a constant $V_{1} > 0$ such that with probability greater than $1 - \epsilon$, we have
\begin{align*}
&M_{2} - |M_{1}| \\
&> \frac{ \|\Delta_{T} \|_{F}^{2} }{h^4} -
V_{1} \sum_{j=1}^M \left( || \bm T_j - \bm T_{j0} ||_1 \sqrt{\log (\sum_{k=1}^j p_k) / n}  \right)     \\
&= \frac{ \sum_{j=1}^M \|\Delta_{T_j} \|_{F}^{2} }{h^4} - V_{1} \sum_{j=1}^M \left( \sqrt{\log (\sum_{k=1}^j p_k) / n} \sum_{(i, k) \in \mathcal{Z}_{T_j}^c}|t_{ik}| \right) \\
&~~~~~~~~~~~~~~~~~~~~~~~~- V_{1} \sum_{j=1}^M \left( \sqrt{\log (\sum_{k=1}^j p_k) / n} \sum_{(i, k) \in \mathcal{Z}_{T_j}}|t_{ik} - t_{0ik}| \right)     \\
&\geq \frac{ \sum_{j=1}^M \|\Delta_{T_j} \|_{F}^{2} }{h^4}- V_{1} \sqrt{\log (p) / n} \sum_{(i, k) \in \bigcup_{j=1}^M \mathcal{Z}_{T_j}^c}|t_{ik}| - V_{1} \sum_{j=1}^M  \sqrt{ s_{T_j} \log (\sum_{k=1}^j p_k) / n  \|\Delta_{T_j}\|_{F}^2 } \\
&= \frac{1}{h^4} \sum_{j=1}^M U_j^2 s_{T_j} \log (\sum_{k=1}^j p_k) / n - V_{1} \sqrt{\log (p) / n} \sum_{(i, k) \in \bigcup_{j=1}^M \mathcal{Z}_{T_j}^c}|t_{ik}| \\
&~~~~~~~~~~~~~~~~~~~~~~~~ - V_{1} \sum_{j=1}^M U_j s_{T_j} \log (\sum_{k=1}^j p_k) / n  \\
&\geq \frac{1}{n h^4} \sum_{j=1}^M U_j^2 s_{T_j} (\log \gamma_j + C_j \log p) - V_{1} \sqrt{\log (p) / n} \sum_{(i, k) \in \bigcup_{j=1}^M \mathcal{Z}_{T_j}^c}|t_{ik}| - V_{1} \frac{\log p}{n} \sum_{j=1}^M U_j s_{T_j}  \\
&\geq \frac{1}{n h^4}\sum_{j=1}^M U_j^2 s_{T_j} \log \gamma_j + \frac{1}{h^4} \frac{\log p}{n} \tau_c \tau_u \sum_{j=1}^M U_j s_{T_j}- V_{1} \sqrt{\log (p) / n} \sum_{(i, k) \in \bigcup_{j=1}^M \mathcal{Z}_{T_j}^c}|t_{ik}|  \\
&~~~~~~~~~~~~~~~~~~~~~~~~- V_{1} \frac{\log p}{n} \sum_{j=1}^M U_j s_{T_j},
\end{align*}
where $\tau_u$ is a positive constant satisfying $\tau_u \leq U_{j}, j = 1, 2, \ldots, M$.
Next, for the penalty term corresponding to $\lambda_1$,
\begin{align*}
M_{3} = \lambda_1 \sum_{(i, k) \in \bigcup_{j=1}^M \mathcal{Z}_{T_j}^c}|t_{ik}| + \lambda_1
\sum_{(i, k) \in \bigcup_{j=1}^M \mathcal{Z}_{T_j}} (|t_{ik}| - |t_{0ik}|) = M_{3}^{(1)} + M_{3}^{(2)},
\end{align*}
where $M_{3}^{(1)} = \lambda_1 \sum_{(i, k) \in \bigcup_{j=1}^M \mathcal{Z}_{T_j}^c}|t_{ik}|$,
and
\begin{align*}
| M_{3}^{(2)} | = | \lambda_1 \sum_{(i, k) \in \bigcup_{j=1}^M \mathcal{Z}_{T_j}}(|t_{ik}| - |t_{0ik}|) |
&\leq \lambda_1 \sum_{(i, k) \in \bigcup_{j=1}^M \mathcal{Z}_{T_j}}| t_{ik} - t_{0ik} |   \\
&\leq \lambda_1  \sum_{j=1}^M \|\Delta_{T_j} \|_{F} \sqrt{s_{T_j}}  \\
&\leq \lambda_1 \sqrt{\frac{\log p}{n}} \sum_{j=1}^M U_j s_{T_j}.
\end{align*}
Combine all the terms above together, with probability greater than $1 - \epsilon$, we have
\begin{align*}
&~~~ G_1(\Delta_{T}) \geq M_{2} - |M_{1}| + M_{3}^{(1)} - |M_{3}^{(2)}|   \\
&\geq \frac{1}{nh^4} \sum_{j=1}^M U_j^2 s_{T_j} \log \gamma_j + \frac{1}{h^4} \frac{\log p}{n} \tau_c \tau_u \sum_{j=1}^M U_j s_{T_j} - V_{1} \sqrt{\log (p) / n} \sum_{(i, k) \in \bigcup_{j=1}^M \mathcal{Z}_{T_j}^c}|t_{ik}|  \\
&~~~  - V_{1} \frac{\log p}{n} \sum_{j=1}^M U_j s_{T_j} + \lambda_1 \sum_{(i, k) \in \bigcup_{j=1}^M \mathcal{Z}_{T_j}^c}|t_{ik}| - \lambda_1 \sqrt{\frac{\log p}{n}} \sum_{j=1}^M U_j s_{T_j} \\
&= \frac{1}{nh^4} \sum_{j=1}^M U_j^2 s_{T_j} \log \gamma_j + \frac{\log p \sum_{j=1}^M U_{j} s_{T_j}}{n} (\frac{\tau_c \tau_u}{h^{4}} - V_{1} - \frac{\lambda_1}{\sqrt{\log (p) / n}}) \\
&~~~ + (\lambda_1 - V_{1} \sqrt{\log (p) / n}) \sum_{(i, k) \in \bigcup_{j=1}^M \mathcal{Z}_{T_j}^c}|t_{ik}|.
\end{align*}
Here $V_{1}$ is only related to the sample size $n$ and $\epsilon$.
Assume $\lambda_1 = K_1 \sqrt{\log (p) / n}$ where $K_1 > V_{1}$, and choose $\tau_u > h^4 (K_1 + V_{1}) / \tau_c$, then $G_1(\Delta_{T}) > 0$.
Therefore, we prove $\| \Delta_{T_j} \|_{F}^{2} = O_{p} (s_{T_j} \log(\sum_{k=1}^j p_k) / n)$.

The proof of part (b) follows the same principle as that for part (a).
Similarly, define $G_2(\Delta_{D}) = L_{\lambda}(\bm D_{0} + \Delta_{D} | \bm T_\ast) - L_{\lambda}(\bm D_{0} | \bm T_\ast)$.
Let $\mathcal{B}_{W_j} = \{ \Delta_{D_j}: \Delta_{D_j} = \Delta_{D_j}', \| \Delta_{D_j} \|_{F}^{2} \leq W_{j}^{2} (s_{D_j} + p_j) \log(p_j) / n \}$ for $j = 1, 2, \ldots, M$, where $W_{j}$ are positive constants.
We only need to show that for $\Delta_{D_j} \in \partial \mathcal{B}_{W_{j}}$, probability $P(G_2(\Delta_{D}) > 0)$ is tending to 1 as $n \rightarrow \infty$ for sufficiently large $W_{j}$, where $\partial \mathcal{B}_{W_{j}}$ are the boundaries of $\mathcal{B}_{W_{j}}$.

Assume $\| \Delta_{D_j} \|_{F}^{2} = W_{j}^{2} (s_{D_j} + p_j) \log(p_j) / n$.
Write $\bm D = \bm D_{0} + \Delta_{D}$, then we decompose $G_2(\Delta_{D})$ as
\begin{align*}
G_2&(\Delta_{D}) = L_{\lambda}(\bm D_{0} + \Delta_{D} | \bm T_\ast) - L_{\lambda}(\bm D_{0} | \bm T_\ast) \\
&= \log |\bm D| - \log |\bm D_0| + \tr \left[ \bm T_\ast' \bm D^{-1} \bm T_\ast \bm S - \bm T_\ast' \bm D_0^{-1} \bm T_\ast \bm S \right ] + \lambda_{2} \sum_{i \neq k} |\psi_{ik}| - \lambda_{2} \sum_{i \neq k}|\psi_{0ik}|  \\
&= M_{4} + M_{5} + M_{6},
\end{align*}
where
\begin{align*}
M_{4} &= \log \left| \bm D \right| - \log \left| \bm D_{0} \right| + \tr[(\bm D^{-1} - \bm D_{0}^{-1})\bm D_{0}],\\
M_{5} &= \tr(\bm D^{-1} - \bm D_{0}^{-1})[\bm T_\ast (\bm S - \bm \Sigma_{0}) \bm T_\ast'],\\
M_{6} &= \lambda_{2} \sum_{i \neq k} |\psi_{ik}| -  \lambda_{2} \sum_{i \neq k} |\psi_{0ik}|.
\end{align*}
The above decomposition of $G_2(\Delta_{D})$ into $M_{4}$ to $M_{6}$ is similar to that in the proof of Lemma 3 of \cite{kang2021on}.
Next we bound each component respectively.
Note that $\|\Delta_{D}\|_{F}^{2} = \| \bm D - \bm D_0 \|_{F}^{2} = \sum_{j=1}^M \| \bm D_j - \bm D_{j0} \|_{F}^{2} = \sum_{j=1}^M \|\Delta_{D_j} \|_{F}^{2}$.
Therefore, based on the proof of Theorem 3.1 in \cite{Jiang2012Joint} together with Lemma \ref{lemma2}, we can have the following two results (I) and (II).

(I) Let $\tau_w$ be a positive constant satisfying $\tau_w \leq W_{j}, j = 1, 2, \ldots, M$, and note that $(1/h^2) \|\Delta_{D}\|_{F}^{2} \leq \| \bm D^{-1} - \bm D_0^{-1} \|_{F}^{2} \leq h^2 \|\Delta_{D}\|_{F}^{2}$, then
\begin{align*}
M_{4} &\geq \frac{1}{8h^2} \| \bm D^{-1} - \bm D_0^{-1} \|_{F}^{2} \geq \frac{1}{8h^4} \|\Delta_{D}\|_{F}^{2}
= \frac{1}{8h^4} \sum_{j=1}^M \|\Delta_{D_j} \|_{F}^{2} \\
&= \frac{1}{8h^4} \sum_{j=1}^M W_{j}^{2} (s_{D_j} + p_j) \log(p_j) / n  \\
&= \frac{1}{8h^4} \sum_{j=1}^M W_{j}^{2} (s_{D_j} + p_j) (\log \gamma_j + C_j \log p) / n  \\
&\geq \frac{1}{8nh^4} \sum_{j=1}^M W_{j}^2 (s_{D_j} + p_j) \log \gamma_j + \frac{1}{8h^4} \frac{\log p}{n} \tau_c \tau_w \sum_{j=1}^M W_{j} (s_{D_j} + p_j).
\end{align*}

(II) For any $\epsilon > 0$, there exists a constant $V_{2} > 0$ such that with probability greater than $1 - \epsilon$, we have
\begin{align*}
|M_{5}| = |\tr(\bm D^{-1} - \bm D_{0}^{-1})[\bm T_\ast (\bm S - \bm \Sigma_{0}) \bm T_\ast']| &\leq \max |\xi_{ik}| \sum_{j=1}^M || \bm D_j - \bm D_{j0} ||_1 \\
&\leq V_{2} \sqrt{\frac{\log p}{n}} \sum_{j=1}^M \sqrt{ (s_{D_j} + p_j) \|\Delta_{D_j}\|_{F}^2 } \\
&\leq V_{2} \frac{\log p}{n} \sum_{j=1}^M W_j (s_{D_j} + p_j),
\end{align*}
where $\xi_{ik}$ is the $(i,k)$th element of matrix $\bm T_\ast (\bm S - \bm \Sigma_{0}) \bm T_\ast'$, and the second inequality applies Lemma 3 of \cite{lam2009sparsistency}.

Next, we decompose $M_{6} = M_{6}^{(1)} + M_{6}^{(2)}$, where $M_{6}^{(1)} =  \lambda_2 \sum_{(i, k) \in \bigcup_{j=1}^M \mathcal{Z}_{D_j}^c}|\psi_{ik}|$,
and
\begin{align*}
| M_{6}^{(2)} | \leq  | \lambda_2 \sum_{(i, k) \in \bigcup_{j=1}^M \mathcal{Z}_{D_j}}(|\psi_{ik}| - |\psi_{0ik}|) |
&\leq  \lambda_2 \sum_{(i, k) \in \bigcup_{j=1}^M \mathcal{Z}_{D_j}}| \psi_{ik} - \psi_{0ik} |   \\
&\leq  \lambda_2  \sum_{j=1}^M \|\Delta_{D_j} \|_{F} \sqrt{s_{D_j} + p_j} \\
&\leq  \lambda_2 \sqrt{\frac{\log p}{n}} \sum_{j=1}^M W_j (s_{D_j} + p_j).
\end{align*}
Combine all the terms above together, with probability greater than $1 - \epsilon$, we have
\begin{align*}
&~~~ G_2(\Delta_{D}) \geq M_{4} - |M_{5}| + M_{6}^{(1)} - |M_{6}^{(2)}|   \\
&\geq \frac{1}{8nh^4} \sum_{j=1}^M W_{j}^2 (s_{D_j} + p_j) \log \gamma_j + \frac{1}{8h^4} \frac{\log p}{n} \tau_c \tau_w \sum_{j=1}^M W_{j} (s_{D_j} + p_j) \\
&~~~ - V_{2} \frac{\log p}{n} \sum_{j=1}^M W_j (s_{D_j} + p_j) + \lambda_2 \sum_{(i, k) \in \bigcup_{j=1}^M \mathcal{Z}_{D_j}^c}|\psi_{ik}| - \lambda_2 \sqrt{\frac{\log p}{n}} \sum_{j=1}^M W_j (s_{D_j} + p_j)   \\
&= \frac{1}{8nh^4} \sum_{j=1}^M W_{j}^2 (s_{D_j} + p_j) \log \gamma_j + \frac{\log p \sum_{j=1}^M W_{j} (s_{D_j} + p_j)}{n} (\frac{\tau_c \tau_w}{8h^4} - V_{2} - \frac{\lambda_2}{\sqrt{\log (p) / n}}) \\
&~~~ +  \lambda_2 \sum_{(i, k) \in \bigcup_{j=1}^M \mathcal{Z}_{D_j}^c}|\psi_{ik}|.
\end{align*}
Here $V_{2}$ is only related to the sample size $n$ and $\epsilon$.
Assume $\lambda_2 = K_2 \sqrt{\log (p) / n}$ where $K_2 > 0$, and choose $\tau_w > 8h^4 (K_2 + V_{1}) / \tau_c$, then $G_2(\Delta_{D}) > 0$.
Therefore, we prove $\| \Delta_{D_j} \|_{F}^{2} = W_{j}^{2} (s_{D_j} + p_j) \log(p_j) / n$.
\end{proof}

\begin{proof}{\textbf{Proof of Theorem \ref{theorem2}}.}{}

Let $\hat{\bm A}_j$ and $\hat{\bm D}_j$ be the estimates obtained from Step 3 in Algorithm 1. We first prove the consistent rates under Frobenius norm of $\hat{\bm T}_j = -\hat{\bm A}_j$ and $\hat{\bm D}_j$ are $s_{T_j} \log(\sum_{k=1}^j p_k) / n$ and $(s_{D_j} + p_j) \log(p_j) / n$, respectively.

At the first iteration of Step 1 in Algorithm 1, the estimate $\hat{\bm A}_{j;1}$ found by minimizing $\ell_{\lambda}(\bm A_j | \bm I)$ is $s_{T_j} \log(\sum_{k=1}^j p_k) / n$ consistent according to part (a) of Theorem \ref{theorem1}.
In Step 2, the estimate $\hat{\bm D}_{j;1}$ found by minimizing $\ell_{\lambda}(\bm D_j^{-1} | \hat{\bm A}_{j;1})$ is $(s_{D_j} + p_j) \log(p_j) / n$ consistent according to part (b) of Theorem \ref{theorem1}.
Next, an estimate $\hat{\bm A}_{j;2}$ obtained by minimizing $\ell_{\lambda}(\bm A_j | \hat{\bm D}_{j;1})$ is $s_{T_j} \log(\sum_{k=1}^j p_k) / n$ consistent, and $\hat{\bm D}_{j;2}$ which minimizes $\ell_{\lambda}(\bm D_j^{-1} | \hat{\bm A}_{j;2})$ is $(s_{D_j} + p_j) \log(p_j) / n$ consistent.
Following this, we hence have that $\hat{\bm A}_j$ (or equivalently $\hat{\bm T}_j$) and $\hat{\bm D}_j$ are $s_{T_j} \log(\sum_{k=1}^j p_k) / n$ and $(s_{D_j} + p_j) \log(p_j) / n$ consistent.
This implies
\begin{align*}
\| \hat{\bm T} - \bm T_{0} \|_{F}^2 = \sum_{j=1}^M || \hat{\bm T}_j - \bm T_0 ||_{F}^2 = \sum_{j=1}^M O_p( s_{T_j} \log(\sum_{k=1}^j p_k) / n ) \leq O_p(s_{T} \log(p) / n)
\end{align*}
and
\begin{align*}
\| \hat{\bm D} - \bm D_{0} \|_{F}^2  = \sum_{j=1}^M O_p \left( (s_{D_j} + p_j) \log(p_j) / n \right) = O_p(\sum_{j=1}^M (s_{D_{j}} + p_{j}) \log (p_{j}) / n).
\end{align*}
Next, we derive of consistent rate of the estimate $\hat{\bm \Omega} = \hat{\bm T}'  \hat{\bm D}^{-1} \hat{\bm T}$.
Let $\bm \Delta_{T} = \hat{\bm T} - \bm T_{0}$ and $\bm \Delta_{D} = \hat{\bm D} - \bm D_{0}$, then we decompose $\| \hat{\bm \Omega} - \bm \Omega_{0} \|_{F}^2$ as
\begin{align*}
&\| \hat{\bm \Omega} - \bm \Omega_{0} \|_{F}^2 \\
&= \| \hat{\bm T}'  \hat{\bm D}^{-1} \hat{\bm T}  - \bm T_{0}' \bm D_{0}^{-1} \bm T_{0} \|_{F}^2 \\
&= \| (\bm \Delta_{T}' + \bm T'_{0}) \hat{\bm D}^{-1} (\bm \Delta_{T} + \bm T_{0}) - \bm T_{0}' \bm D_{0}^{-1} \bm T_{0} \|_{F}^2 \\
&\leq \| \bm \Delta_{T}' \hat{\bm D}^{-1} \bm T_{0} \|_{F}^2 + \| \bm T_{0}' \hat{\bm D}^{-1} \bm \Delta_{T} \|_{F}^2 + \| \bm \Delta_{T}' \hat{\bm D}^{-1} \bm \Delta_{T} \|_{F}^2 + \| \bm T_{0}' (\hat{\bm D}^{-1} - \bm D_{0}^{-1}) \bm T_{0} \|_{F}^2.
\end{align*}
Now we bound four terms separately.
Use the symbol $\| \bm A \|$ to represent the spectral norm of matrix $\bm A$.
Since $\| \bm T_0 \| = O(1)$ and $\| \bm D_0 \| = O(1)$ by Lemma \ref{lemma2}, it is obvious that
$\| \hat{\bm D} \| = \| \hat{\bm D} - \bm D_0 + \bm D_0\| \leq \| \bm \Delta_{D} \| + \| \bm D_0 \| \leq \| \bm \Delta_{D} \|_{F} + \| \bm D_0 \| = O_p(1)$.
In addition, the single values of $\bm \Omega^{-1}$ are bounded since the single values of $\bm \Omega$ are bounded, which together with Lemma \ref{lemma2} leads to
$\| \bm D_0^{-1} \| = O(1)$, hence similarly $\| \hat{\bm D}^{-1} \| = O_p(1)$.
As a result, it is easy to obtain
\begin{align*}
\| \bm \Delta_{T}' \hat{\bm D}^{-1} \bm T_{0} \|_{F}^2 \leq \| \bm \Delta_{T}'\|_{F}^2 \| \hat{\bm D}^{-1} \| \| \bm T_{0} \| = O_p(\| \bm \Delta_{T} \|_{F}^2),
\end{align*}
and the second term $\| \bm T_{0}' \hat{\bm D}^{-1} \bm \Delta_{T} \|_{F}^2 = \| \bm \Delta_{T}' \hat{\bm D}^{-1} \bm T_{0} \|_{F}^2 = O_p(\| \bm \Delta_{T} \|_{F}^2)$. For the third term,
\begin{align*}
\| \bm \Delta_{T}' \hat{\bm D}^{-1} \bm \Delta_{T} \|_{F}^2 \leq \| \bm \Delta_{T}' \|_{F}^2 \| \hat{\bm D}^{-1} \| \| \bm \Delta_{T} \|_{F}^2 = o_p({\| \bm \Delta_{T} \|_{F}^2}).
\end{align*}
For the fourth term,
\begin{align*}
\| \bm T_{0}' (\hat{\bm D}^{-1} - \bm D_{0}^{-1}) \bm T_{0} \|_{F}^2 \leq \| \bm T_{0}' \| \| \hat{\bm D}^{-1} - \bm D_{0}^{-1} \|_{F}^2 \| \bm T_{0} \| = O_p(\| \hat{\bm D} - \bm D_{0}\|_{F}^2).
\end{align*}
Consequently, by the convergence rates of $\hat{\bm T}  - \bm T_{0}$ and $\hat{\bm D} - \bm D_{0}$ from Theorem \ref{theorem1}, we reach the conclusion
\begin{align*}
\| \hat{\bm \Omega}  - \bm \Omega_{0} \|_{F}^2 &= O_{p}(\| \hat{\bm T}  - \bm T_{0} \|_{F}^2) +  O_{p}(\| \hat{\bm D} - \bm D_{0} \|_{F}^2) \\
&= O_p \left( \frac{s_{T} \log p + \sum_{j=1}^M (s_{D_{j}} + p_{j}) \log p_{j} }{n} \right).
\end{align*}
\end{proof}
\end{document}